\documentclass[aps,superscriptaddress,nofootinbib,a4paper,longbibliography,twocolumn]{revtex4-2}
\usepackage{amsmath}
\usepackage{amsfonts}
\usepackage{amsthm}
\usepackage{amssymb}
\usepackage{graphicx}
\usepackage{enumerate}
\usepackage{color}
\usepackage[T1]{fontenc}
\usepackage{braket}
\usepackage{todonotes}
\usepackage{soul}
\usepackage{subfigure}
\usepackage{mathrsfs}
\usepackage{float}
\usepackage{amsmath}
\usepackage{bm}
\usepackage{multirow}
\usepackage{tabularx}
\usepackage{outlines}

\graphicspath{{figures/}}
\newtheorem{theorem}{Theorem}
\newtheorem{lemma}{Lemma}

\newtheorem*{theorem*}{Theorem}
\newtheorem{definition}{Definition}
\newtheorem{example}{Example}

\newtheorem{assumption}{Assumption}

\newtheoremstyle{operation}
  {\topsep}
  {\topsep}
  {}
  {}
  {\itshape}
  {.}
  {.5em}
  {\thmname{#1}\thmnumber{ #2}\thmnote{ (#3)}}
\theoremstyle{operation}

\def\>{\rangle}
\def\<{\langle}

\newcounter{protocol}
\renewcommand{\theprotocol}{\arabic{protocol}}

\usepackage[bookmarks=false,colorlinks,citecolor=blue,
linkcolor=blue,anchorcolor=blue,urlcolor=blue
]{hyperref}

\begin{document}
\title{Jamming Extensions of Quantum Correlations Lead to Hidden Superluminal Signaling}
\author{Ravishankar Ramanathan}
\email{ravi@cs.hku.hk}
\affiliation{School of Computing and Data Science, The University of Hong Kong, Pokfulam Road, Hong Kong}
\author{Xie Sicheng}
\affiliation{School of Computing and Data Science, The University of Hong Kong, Pokfulam Road, Hong Kong}
\author{Micha{\l} Eckstein}
\affiliation{Institute of Theoretical Physics,  Faculty of Physics, Astronomy and Applied Computer Science, 
Jagiellonian University,  ul. {\L}ojasiewicza 11,  30-348 Krak\'{o}w,
Poland}
\author{Pawe{\l} Horodecki}
\affiliation{International Centre for Theory of Quantum Technologies (ICTQT), University of Gda\'{n}sk,
Jana Ba\.{z}y\'{n}skiego 8, 80-309 Gda\'{n}sk}
\affiliation{Faculty of Applied Physics and Mathematics, Gda\'{n}sk University of Technology, Gabriela Narutowicza 11/12,
80-233 Gda\'{n}sk, Poland}

\begin{abstract}
    Relativistic causality in certain spacetime configurations permits jamming - superluminal causal influences that nevertheless do not enable superluminal signaling. We fully characterize the correlations compatible with relativistic causality (RC) by extending the operator framework of PRL 104, 140404. When the underlying state is required to be quantum, we prove that any nontrivial jamming - whether state-independent or state-dependent - necessarily leads to hidden superluminal signaling. Thus, the only consistent possibilities are standard quantum correlations without jamming or the full relativistically causal correlation set, with intermediate jamming extensions of quantum correlations leading to signaling. As an application of our characterization, we investigate the security of device-independent (DI) cryptographic primitives against adversaries constrained only by relativistic causality. We construct explicit RC attacks that break DI bit commitment and DI secret sharing in jamming geometries, even when these protocols remain secure against no-signaling adversaries. Our results show that security against relativistic adversaries requires behaviors to be specified together with the spacetime locations of their measurement events; input-output statistics alone are insufficient.  
\end{abstract}

\maketitle

\textit{Introduction.-} Relativistic causality and the no-signaling principle are often treated as interchangeable constraints on physical correlations. However, it has been recently discovered that in multipartite Bell scenarios, they are not equivalent \cite{GPR96, HR19, VC22}. Depending on the spacetime locations of the measurement events, relativistic causality only enforces a subset of the usual no-signaling equalities. In particular in certain configurations, this permits jamming - a form of superluminal causal influence that alters the joint correlations of a subset of parties while leaving individual marginals untouched, thereby preserving the relativistic causal constraint of no superluminal signaling \cite{GPR96, HR19, VC22}. 

This leads to a fundamental question: can quantum correlations exploit the causal loophole left open under relativistic causality? In other words, do there exist extensions of quantum correlations (intermediate between the quantum set and the full relativistic causal set) that allow nontrivial jamming while remaining compatible with quantum kinematics and the impossibility of superluminal signaling. In this paper, we answer this question in the negative. We first extend the unified operator framework for multipartite correlations (including no-signaling, quantum and classical correlations) by Acin et al. in \cite{Acin10} and characterize the maximal set of correlations compatible with relativistic causality.  This $\mathcal{RC}$ set is expressed in this formalism using a Hermitian operator $\Omega$ and locally identity-preserving  jamming maps $\Phi$, is operationally closed and free of any signaling pathologies. As such, it is a well-defined object (formally a convex polytope) consistent with relativistic causality and worthy of study in a similar fashion to the no-signaling polytope \cite{MAG06, PR94, PBS11}. We then show that when the underlying state is required to be a valid density operator (i.e., when $\Omega$ is required to be a positive semidefinite $\rho$), any attempt to introduce nontrivial jamming maps (in either a state-independent or state-dependent manner) necessarily generates hidden superluminal signaling. This argument is an analog of the classic result by Gisin \cite{Gisin90, SBG01, BH15} on the incompatibility of state-dependent maps with no-signaling. Specifically here by hidden superluminal signaling, we mean that a remote party can choose between two ensemble decompositions of the same density operator, and that this choice produces distinguishable output states after a state-dependent jamming evolution. The remote measurement choice  can therefore be inferred outside its future light cone, even though the original input-output statistics satisfy the relativistic causality marginal constraints, giving rise to superluminal signaling. The remaining consistent possibilities are therefore the standard set of quantum correlations without jamming or the full relativistically causal correlation set, with intermediate sets leading to signaling.

The set of relativistically causal correlations has cryptographic consequences. Device-Independent (DI) protocols aim to achieve security solely based on the input-output behavior of devices without placing on any trust on their inner workings or quantum characterization. The ultimate security goal here is to achieve security against adversaries only limited by the principle of no-superluminal-signaling. One primary motivation for studying the set $\mathcal{RC}$ has been its implications \cite{HR19, SKGH+21} for the security of DI randomness generation \cite{Colbeck07, Brandao16, RLW25} and key distribution \cite{BHK05}, where attacks have been found in $\mathcal{RC}$ theories exploiting a lack of monogamy of nonlocality. In this work, we investigate the implications for the relativistic security of a different class of primitives, namely mistrustful cryptographic protocols, in which one or more protocol participants may themselves actively deviate from the prescribed strategy. DI protocols have been established for primitives like bit commitment (by Fehr and Fillinger \cite{FF15}) and secret sharing (by Moreno et al. \cite{MBNC20}), and proven secure against no-signaling adversaries under the assumption of independent and identically distributed (i.i.d.) behavior over multiple rounds. We show that these protocols can be rendered insecure in measurement configurations that admit the possibility of jamming. Specifically, we construct explicit relativistic attacks on these protocols, demonstrating that \(\mathcal{RC}\) correlations allow active cheating in dishonest cryptography rather than just passive leakage to an external adversary. Notably, these results are not consequences merely of the aforementioned lack of monogamy of nonlocality. These attacks reinforce the notion that input-output statistics alone are insufficient for security against relativistic adversaries, the behaviors must also carry the spacetime labels of their constituent measurement events.

\textit{A Unified Framework for Relativistic Causal, No-Signaling, Quantum and Classical Correlations.-}

We begin by recalling the set of correlations of interest. For convenience and to avoid cluttering notation, our exposition throughout the paper will mainly focus on the three-player scenario with the generalization to the arbitrary $n$-player scenario being deferred to Appendix \ref{sec:RC}. Accordingly, we consider a tripartite Bell experiment with three players Alice, Bob and Charlie who, at spacelike separated points $(t_A, p_A)$, $(t_B, p_B)$ and $(t_C, p_C)$ freely and randomly choose measurements labeled by $x, y, z$ with $x \in [m_A], y \in [m_B], z \in [m_C]$ and obtain outcomes $a \in [r_A], b \in [r_B], c \in [r_C]$ respectively, where we adopt the notation $[n] := \{1,2,\ldots,n\}$. The experiment is characterized by the set of joint probability distributions $P(a,b,c|x,y,z)$ that together constitute the behavior/correlation $\{P(a,b,c|x,y,z)\}$. The set of no-signaling correlations \cite{MAG06, PR94, PBS11} are those that obey the conditions that the marginal distributions seen by any subset of players is independent of the inputs chosen by the complementary set of players. Accordingly, we denote the set $\mathcal{NS}$ as those behaviors $\{P(a,b,c|x,y,z)\}$ for which the Alice-Bob marginal $P(a,b|x,y)$ is independent of $z$ for all $a,b,x,y,z$, and likewise for all Alice-Charlie and Bob-Charlie marginal probabilities. This set of behaviors $\mathcal{NS}$ constitutes a convex polytope of dimension $dim(\mathcal{NS}) = \prod_{K=A,B,C} [1 + m_K(r_K-1)] - 1$. 

As shown in \cite{GPR96, HR19, VC22, EMHRH25}, the relativistic causality conditions that enforce no superluminal signaling are not precisely captured by the above no-signaling constraints in general multipartite scenarios. In particular, in certain spacetime configurations of the measurement events that are termed jamming configurations, only a subset of the no-signaling conditions need to be considered. In a jamming configuration such as shown in Fig. \ref{fig:jamming-spacetime}, where the intersection of the future light-cones of Alice's and Charlie's output events lies inside the future light-cone of Bob's input event, relativistic causality permits the joint correlations of Alice and Charlie to depend on Bob's input, provided their single-party marginals remain independent of that input. While detailed derivations and explanations can be found in \cite{GPR96, HR19, VC22, EMHRH25}, intuitively the reason is clear. The joint correlations of Alice and Charlie's outcomes are only accessible at a spacetime point in the intersection of their future light-cones. As such, any superluminal causal influence that jams their correlations still does not lead to superluminal signaling - no faster-than-light transmission of information takes place, and the resulting set of correlations remains compatible with relativistic causality. This larger set of behaviors $\{P(a,b,c|x,y,z)\}$ compatible with this weaker set of conditions is termed the set of relativistically causal correlation $\mathcal{RC}$. The set $\mathcal{RC}$ is again a convex polytope, this time of dimension $dim(\mathcal{RC}) = dim(\mathcal{NS}) + (m_B-1)m_A(r_A-1)m_C(r_C-1)$ \cite{HR19, SKGH+21}.

\begin{figure}[t]
    \centering
    \includegraphics[width=1.1\columnwidth]{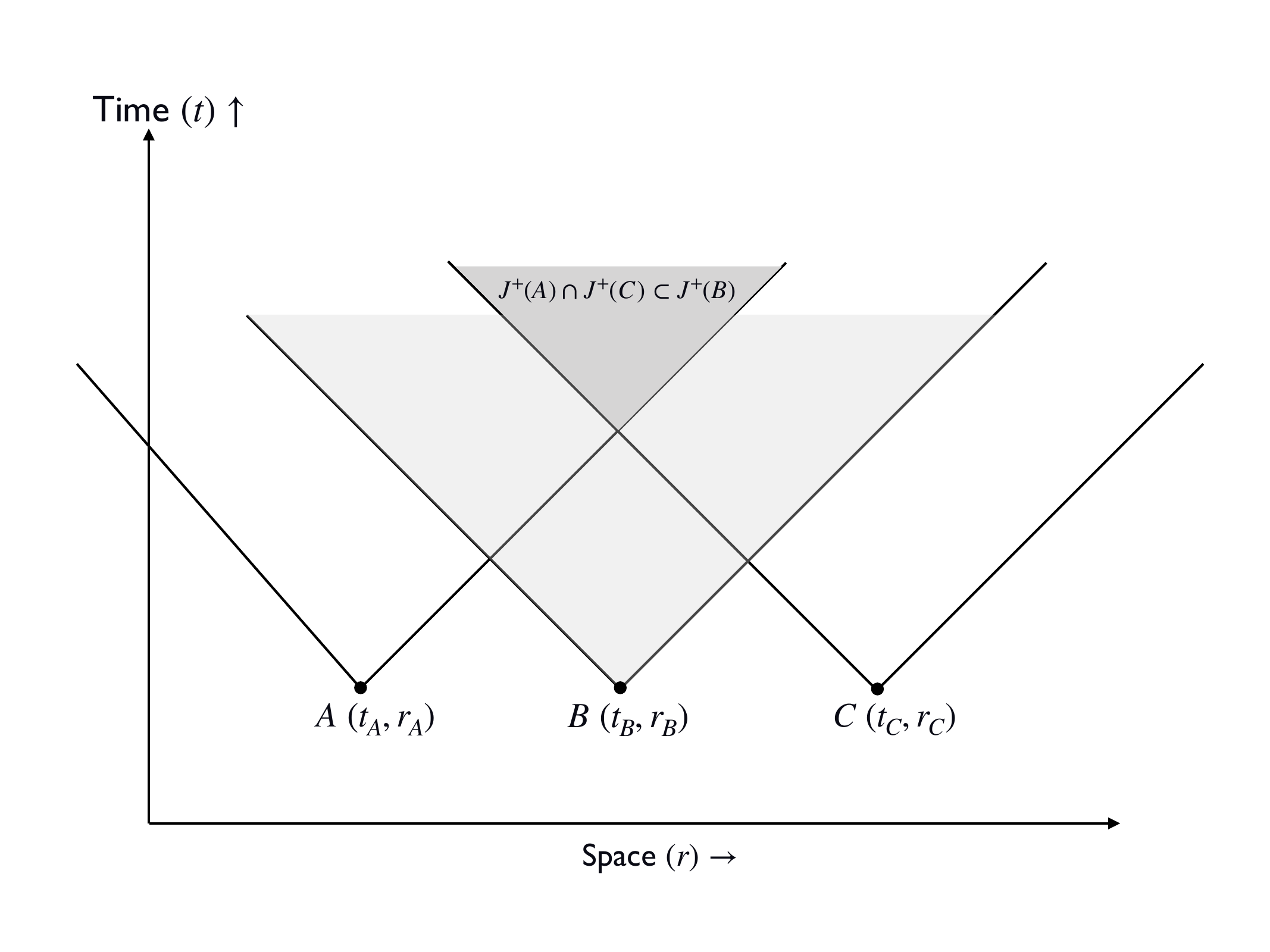} 
    \caption{The Jamming spacetime configuration for three spacelike separated players at locations $A (t_A, r_A)$, $B ( t_B, r_B)$ and $C (t_C, r_C)$. In this configuration, the intersection of the future light cones of $A$ and $C$, denoted $J^+(A) \cap J^+(C)$ is entirely contained within the future light cone of $B$ denoted $J^+(B)$. }
    \label{fig:jamming-spacetime}
\end{figure}

In \cite{Acin10}, Ac\'{i}n et al. formulated a unified framework of no-signaling, quantum and classical correlations in terms of an elegant operator representation. Our first result is an extension of this framework to include the set of relativistically causal correlations. In particular, we characterize the set of $\mathcal{RC}$ correlations Thm. \ref{thm:RC-rep}, with a detailed proof provided in Appendix \ref{sec:RC}.  
\begin{theorem}
\label{thm:RC-rep}
An arbitrary tripartite behavior $\{P(a,b,c\vert{}x,y,z)\}$ satisfies the Relativistic Causality $\mathcal{RC}$ constraints for a jamming space-time configuration (where the intersection of Alice and Charlie's future light cones is contained within Bob's future light cone) if and only if there exist:  
\begin{itemize}
    \item Local Hilbert spaces $\mathcal{H}_A, \mathcal{H}_B, \mathcal{H}_C$ and sets of local quantum measurements $\{M_a^x\}, \{M_b^y\}, \{M_c^z\}$ (that sum to their respective local identities, e.g., $\sum_a M_a^x = \mathbb{I}_A$), 
    \item  a Hermitian operator $\Omega_{ABC} \in \mathcal{L}(\mathcal{H}_A \otimes \mathcal{H}_B \otimes \mathcal{H}_C)$ of unit trace ($\text{tr}(\Omega_{ABC}) = 1)$, and
    \item a family of linear, trace-preserving maps $\Phi_y : \mathcal{L}(\mathcal{H}_A \otimes \mathcal{H}_C) \to \mathcal{L}(\mathcal{H}_A \otimes \mathcal{H}_C)$, parameterized by Bob's setting $y$,
\end{itemize}  
such that:
\begin{equation}
    \label{eq:gen-trace-rule}
    P(a,b,c\vert{}x,y,z) = \text{tr}\Big[ \Omega_{ABC} \big( \Phi_y(M_a^x \otimes M_c^z) \otimes M_b^y \big) \Big],
\end{equation}
for all $a,b,c,x,y,z$.
And the map $\Phi_y$ for any $y$ is locally identity-preserving, i.e., its action on any local observable $X_A \in \mathcal{L}(\mathcal{H}_A)$ and $X_C \in \mathcal{L}(\mathcal{H}_C)$ satisfies:
\begin{eqnarray}
\label{eq:Phiy-RC}
\Phi_y(X_A \otimes \mathbb{I}_C) &=& X_A \otimes \mathbb{I}_C, \nonumber \\ \Phi_y(\mathbb{I}_A \otimes X_C) &=& \mathbb{I}_A \otimes X_C.
\end{eqnarray}
\end{theorem}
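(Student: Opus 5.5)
The plan has two directions. The ``if'' direction is a direct computation. The ``only if'' direction adapts the construction of Ac\'{i}n et al.\ \cite{Acin10}, with the jamming map $\Phi_y$ absorbing the dependence of the Alice--Charlie joint correlations on Bob's input.

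\emph{If direction.} Suppose $P$ has the form \eqref{eq:gen-trace-rule}. Summing over $b$ replaces $M_b^y$ by $\mathbb{I}_B$. Summing over $a$ (resp.\ $c$) turns $M_a^x\otimes M_c^z$ into $\mathbb{I}_A\otimes M_c^z$ (resp.\ $M_a^x\otimes\mathbb{I}_C$), and \eqref{eq:Phiy-RC} then removes $\Phi_y$. Consequently the marginals $P(a,b|x,y)$, $P(b,c|y,z)$, $P(a|x)$ and $P(c|z)$ are given by $\Phi$-free trace expressions. Each such expression depends only on the inputs of the parties whose operators appear, which is exactly the list of $\mathcal{RC}$ constraints for the jamming configuration: every no-signaling condition except the independence of $P(a,c|x,z)$ from $y$. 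Normalization follows from $\operatorname{tr}\Omega=1$ and from trace preservation, i.e.\ $\Phi_y(\mathbb{I})=\mathbb{I}$, which is itself implied by \eqref{eq:Phiy-RC}.

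\emph{Only if direction.} Given $P\in\mathcal{RC}$, take $\mathcal{H}_K=\mathbb{C}^{m_K r_K}$ with orthogonal projective measurements $M_a^x=|x,a\rangle\langle x,a|+\frac{1}{m_A}\sum_{x'\neq x}$(appropriate padding). A simpler choice is to use the Ac\'{i}n et al.\ trick of commuting projectors on $\bigotimes_x \mathbb{C}^{r_A}$, i.e.\ $\mathcal{H}_A=(\mathbb{C}^{r_A})^{\otimes m_A}$ with $M_a^x$ the projector onto outcome $a$ in the $x$-th factor. Similar choices are made for $B$ and $C$. The products $\{M_a^x\}$, restricted to affinely independent families (outcomes $a<r_A$ plus identity), span a subspace of operators. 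The key point is that the set of operators $M_{\vec a}^{\vec x}=M_a^x\otimes M_b^y\otimes M_c^z$ for the ``minimal'' generating family is linearly independent. Hence one can define a linear functional $L$ on their span with prescribed values, and represent it by a Hermitian $\Omega$ via the Hilbert--Schmidt inner product. This is precisely the argument of \cite{Acin10}: the no-signaling data are encoded through the Collins--Gisin parametrization.

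For $\mathcal{RC}$, the Collins--Gisin-type parametrization has the extra coordinates $P(a,c|x,y,z)$ for $a<r_A$, $c<r_C$, $y\geq 2$, matching the dimension count $(m_B-1)m_A(r_A-1)m_C(r_C-1)$. The plan is as follows. First, build $\Omega$ from the $y=1$ slice of $P$, which is a legitimate no-signaling-type data set once the $y$-dependence is dropped. Concretely, let $\Omega$ reproduce the $\mathcal{RC}$ marginals, the full-correlation terms $P(a,b,c|x,y,z)$, and $P(a,c|x,1,z)$. Second, define $\Phi_y$ on the operators $M_a^x\otimes M_c^z$ ($a<r_A$, $c<r_C$) by
\[
\Phi_y(M_a^x\otimes M_c^z)=M_a^x\otimes M_c^z+\Delta^y_{a,c,x,z}\,D ,
\]
where $D=D_A\otimes D_C$ is a fixed operator chosen with $\operatorname{tr}_B$-structure such that $\operatorname{tr}[\Omega (D\otimes M_b^y)]$ produces the needed correction $P(a,b,c|x,y,z)-\operatorname{tr}[\Omega(M_a^x\otimes M_c^z\otimes M_b^y)]$. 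On $X_A\otimes\mathbb{I}$, on $\mathbb{I}\otimes X_C$, and on the orthogonal complement, $\Phi_y$ is defined as the identity. Well-definedness as a linear map requires that the chosen operators, together with the local-identity-containing subspace, are linearly independent. Since $M_a^x\otimes M_c^z$ with $a<r_A$, $c<r_C$ are independent modulo $\mathcal{L}(\mathcal{H}_A)\otimes\mathbb{I}+\mathbb{I}\otimes\mathcal{L}(\mathcal{H}_C)$ in the tensor-product construction, this holds.

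The main obstacle is arranging the correction so that a single $\Omega$ and a $b$-independent $\Phi_y$ reproduce all $P(a,b,c|x,y,z)$. A cleaner route avoids that difficulty: absorb everything into $\Omega$ by letting the $A$ and $C$ spaces carry a copy register for $y$. Alternatively, one can define $\Omega$ using full data and let $\Phi_y$ only shift $(a,c)$-correlation terms paired with $B$'s identity component. I would first try that second version. Write $P(a,b,c|xyz)$ in a basis where $B$'s operator is either $\mathbb{I}_B$ or $M_b^y$ with $b<r_B$. Terms with $M_b^y$ are fixed by $\Omega$, and $\Phi_y$ acts trivially on those because of a block structure that makes $\operatorname{tr}[\Omega(D\otimes M_b^y)]=0$ while $\operatorname{tr}[\Omega(D\otimes\mathbb{I}_B)]=1$. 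The terms with $\mathbb{I}_B$ carry $P(a,c|x,y,z)$, which is corrected by $\Phi_y$. Verifying that $\Omega$ can be chosen with this extra orthogonality, by enlarging $\mathcal{H}_B$ by one dimension if necessary, is the step I expect to need care.
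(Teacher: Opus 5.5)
Your \emph{if} direction is fine. Your \emph{only if} architecture is sound and differs from the paper's: an Ac\'{i}n-type $\Omega$ that matches every Collins--Gisin coordinate but uses the $y=1$ slice for the $AC$ terms, so only the $\mathbb{I}_B$ component needs correcting, together with $\Phi_y=\mathrm{id}+f_y(\cdot)\,D$ for a single operator $D$.

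However, the step that carries the whole necessity proof is left open. That step is the existence of $D$ with $\mathrm{tr}[\Omega(D\otimes M_b^y)]=0$ for all $y$ and $b<r_B$, and $\mathrm{tr}[\Omega(D\otimes\mathbb{I}_B)]=1$. Enlarging $\mathcal{H}_B$ does not address it. Suppose $D$ lies in $S_{AC}:=\mathrm{span}\{\mathbb{I}_A,M_a^x\}\otimes\mathrm{span}\{\mathbb{I}_C,M_c^z\}$. Then both numbers are linear combinations of the prescribed Collins--Gisin values, whatever $\mathcal{H}_B$ is, and they can be forced to coincide. For instance, let Bob deterministically output a non-final outcome $b_0$, as in the HR19 box of the paper's Example (up to relabelling). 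Then $P(a,b_0,c|x,1,z)=P(a,c|x,1,z)$, $P(a,b_0|x,1)=P(a|x)$, $P(b_0,c|1,z)=P(c|z)$ and $P(b_0|1)=1$. Hence $\mathrm{tr}[\Omega(X\otimes M_{b_0}^1)]=\mathrm{tr}[\Omega(X\otimes\mathbb{I}_B)]$ for every $X\in S_{AC}$, so no $D\in S_{AC}$ can take the values $0$ and $1$, although this box has genuine jamming.

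Separately, you never check that $\Phi_y$ is trace-preserving, which the statement demands. In the \emph{if} part you conflated this with unitality $\Phi_y(\mathbb{I})=\mathbb{I}$. For $\Phi_y=\mathrm{id}+f_y(\cdot)D$ with $f_y\neq 0$, trace preservation holds iff $\mathrm{tr}\,D=0$.

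The missing idea is that the freedom must come from the $AC$ side. Choose $D$ Hermitian and traceless with $D\notin S_{AC}$. In your tensor-product construction every $M_a^x,M_c^z$ is diagonal in the product basis, so any off-diagonal Hermitian $D$ works. Then $\{E_A\otimes E_C\}\cup\{D\}$ is linearly independent, hence so is its tensor product with $\{\mathbb{I}_B,M_b^y\}_{b<r_B}$. So in the same step where you represent the prescribed functional by $\Omega$, you may also prescribe the real values $\mathrm{tr}[\Omega(D\otimes\mathbb{I}_B)]=1$ and $\mathrm{tr}[\Omega(D\otimes M_b^y)]=0$, and still get a Hermitian unit-trace $\Omega$.

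Using the traceless dual elements $\tilde M$, set $f_y(X)=\sum_{a<r_A,\,c<r_C,\,x,z}\bigl[P(a,c|x,y,z)-P(a,c|x,1,z)\bigr]\,\mathrm{tr}[(\tilde M_a^x\otimes\tilde M_c^z)X]$. This $f_y$ vanishes on $\mathcal{L}(\mathcal{H}_A)\otimes\mathbb{I}_C+\mathbb{I}_A\otimes\mathcal{L}(\mathcal{H}_C)$, which gives local identity preservation, and $\mathrm{tr}\,D=0$ gives trace preservation. The $D$-term is invisible for $b<r_B$ and adds exactly the missing $P(a,c|x,y,z)-P(a,c|x,1,z)$ at $b=r_B$. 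Final outcomes of Alice or Charlie then follow by linearity plus the $\mathcal{RC}$ equalities.

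The paper instead expands its perturbation in products $M_{a'}^{x'}\otimes M_{c'}^{z'}$ around $Q=P(a,b|x,y)P(b,c|y,z)/P(b|y)$. It obtains solvability by adjoining dummy settings for Alice and Charlie, which again enlarges the $AC$ side. Your repaired route needs no rank argument and no assumption $P(b|y)>0$. Your $y=1$ reference is also a legitimate Collins--Gisin assignment, whereas the paper's $Q$ generally has a $y$-dependent $AC$ marginal and so is not actually no-signalling.
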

Here $\mathcal{L}(\mathcal{H}_A \otimes H)$ denotes the set of all linear operators from the composite Alice-Charlie Hilbert space to itself.
When all maps $\Phi_y$ are the identity ($\Phi_y = \mathbb{I}_{AC})$, this expression recovers the operator representation of no-signaling correlations from \cite{Acin10}. To emphasize the power of the operator framework and to show the distinction between $\mathcal{RC}$ and $\mathcal{NS}$, we derive in Appendix \ref{sec:RC} the operator representation of an example correlation that belongs to $\mathcal{RC}$ but is outside the set $\mathcal{NS}$. We end this section by emphasizing that the set $\mathcal{RC}$ is convex, free of signaling pathologies, and operationally constitutes the natural relativistically causal generalization of the no-signaling polytope to arbitrary spacetime configurations of measurement events, and that Thm. \ref{thm:RC-rep}  characterizes this set in an elegant and unified operator framework. 

\textit{Jamming extensions of quantum correlations.-}
The physical significance of jamming has been subject to debate recently. In \cite{VC22} it was argued on the basis of certain fine-tuned configurations that even the subset of relativistically causal constraints that we have outlined above might not rule out causal loops in $1+1$ Minkowski spacetime, while in \cite{Weilenmann25} it was argued on the basis of certain entropic monogamy relations obeyed in such theories that they might lead to superluminal signaling, see \cite{EMHRH25} as well for clarifications and counter arguments. In this section, we consider the more natural question - if relativistic causality permits jamming, is it possible to extend quantum correlations to incorporate jamming, and are such extensions physical and operationally meaningful? In this context, we remark that besides the no-signaling set which has been studied for its physicality in terms of communication complexity \cite{Brassard06}, information causality \cite{Pawlowski09} and other reasonable information-theoretic principles \cite{Fritz13}, so have intermediate sets such as the `almost quantum' set \cite{NGHA15}. As such, it is of significant interest to identify whether a similar intermediate set, extending quantum correlations to incorporate jamming while respecting relativistic causality, is physically plausible and to identify information-theoretic principles to rule out jamming. 

Accordingly, we consider the situation wherein the physical state of the system $\Omega_{ABC}$ in Thm. \ref{thm:RC-rep} is required to be a positive semidefinite Hermitian operator of unit trace $\rho_{ABC} \in \mathcal{L}(\mathcal{H}_A \otimes \mathcal{H}_B \otimes \mathcal{H}_C)$ ($\rho_{ABC} \ge 0$ and $\text{tr}(\rho_{ABC}) = 1$). That is, we consider the set of correlations obtained when in our framework we restrict ourselves to the set of states from standard quantum theory while still allowing for the jamming maps. For technical reasons which we will elaborate on later, we also restrict the local Hilbert spaces $\mathcal{H}_A, \mathcal{H}_B, \mathcal{H}_C$ to be finite-dimensional. We then obtain an extension of quantum theory where the joint probability of outcomes given settings is given by:
$P(a,b,c\vert{}x,y,z) = \text{tr}[\rho_{ABC}(\Phi_{y}(M_{a}^{x}\otimes M_{c}^{z})\otimes M_{b}^{y})]$,
which is identical to quantum theory except for the addition of the jamming map $\Phi_y$. Now to ensure non-negativity of the probabilities $p(a,b,c|x,y,z)$, we must impose that the maps obey $\left((\Phi_y)^{\dagger}_{AC} \otimes \mathbb{I}_B \right)  (\rho_{ABC}) \succeq 0$, i.e., $\Phi_y^{\dagger}(\rho_{AC}) \succeq 0$. 
We term the resulting correlations, Relativistically Causal Quantum ($\mathcal{RCQ}$) correlations. Formally, we have the following.

\begin{definition}
\label{def:RCQ}
    A tripartite behavior $P(a,b,c|x,y,z)$ belongs to the set of relativistically-causal quantum ($\mathcal{RCQ}$) correlations for a jamming space-time configuration if and only if there exist:
    \begin{itemize}
      \item local finite-dimensional Hilbert spaces $\mathcal{H}_A, \mathcal{H}_B, \mathcal{H}_C$ and sets of local quantum measurements $\{M_a^x\}, \{M_b^y\}, \{M_c^z\}$ (that sum to their respective local identities, e.g., $\sum_b M_b^y = \mathbb{I}_B$), 
    \item  a positive semidefinite Hermitian operator $\rho_{ABC} \geq 0$ in $\mathcal{L}(\mathcal{H}_A \otimes \mathcal{H}_B \otimes \mathcal{H}_C)$ of unit trace ($\text{tr}(\rho_{ABC}) = 1)$,
    \item a family of linear maps $\Phi_y : \mathcal{L}(\mathcal{H}_A \otimes \mathcal{H}_C) \to \mathcal{L}(\mathcal{H}_A \otimes \mathcal{H}_C)$, parameterized by Bob's setting $y$, that are trace-preserving $\Phi_y^{\dagger}(\mathbb{I}_A \otimes \mathbb{I}_C) = \mathbb{I}_A \otimes \mathbb{I}_C$, and preserve local identities, i.e., their action on any local observable $X_A \in \mathcal{L}(\mathcal{H}_A)$ and $X_C \in \mathcal{L}(\mathcal{H}_C)$ satisfies:
$$\Phi_y(X_A \otimes \mathbb{I}_C) = X_A \otimes \mathbb{I}_C,$$
$$\Phi_y(\mathbb{I}_A \otimes X_C) = \mathbb{I}_A \otimes X_C,$$ and furthermore, the maps $\Phi_y$ preserve positivity of the initial state, i.e., $\Phi_y^{\dagger}(\rho_{AC}) \succeq 0$,
\end{itemize}  
such that:
\begin{equation}
    \label{eq:gen-trace-rule}
    P(a,b,c\vert{}x,y,z) = \text{tr}\Big[ \rho_{ABC} \big( \Phi_y(M_a^x \otimes M_c^z) \otimes M_b^y \big) \Big].
\end{equation}

\end{definition}

The set of correlations $\mathcal{RCQ}$ in Def. \ref{def:RCQ} is a nontrivial extension of quantum correlations in that there exist RCQ correlations outside the standard quantum set. In Appendix \ref{sec:JammingQuantum-signal}, we present a paradigmatic example of a correlation in $\mathcal{RCQ}$ (together with the corresponding state $\rho_{ABC}$ and jamming maps $\Phi_y$) that is not in the quantum set. Specifically, in the example Bob jams the correlations between Alice-Charlie, switching from a maximally nonlocal correlation for $y=0$ (violating the CHSH inequality up to its quantum maximum value) to a fully mixed behavior for $y=1$.


It is important to note that the distinction with standard quantum theory arises here because of the action of jamming maps $\Phi_y$ which are linear trace-preserving but not necessarily CPTP or even just positive. The natural question arises on the physicality of such jamming extensions of quantum theory.
Our first result in this section that while the jamming maps (or more precisely their adjoint maps) are required to preserve positivity of the initial state $\rho_{AC}$, relaxing them to be positive on all states exactly recovers quantum theory. We formalize this via the following theorem.

\begin{theorem}
\label{thm:jammingmap-pos}
Let the linear trace-preserving maps $\Phi_y: \mathcal{L}(\mathcal{H}_A \otimes \mathcal{H}_C) \rightarrow \mathcal{L}(\mathcal{H}_A \otimes \mathcal{H}_C)$ satisfy the Relativistic Causality constraints in \eqref{eq:Phiy-RC}. If $\Phi_y$ are additionally required to be positive maps for all $y$, then $\Phi_y \; \forall y$ is strictly the identity map on the joint space $\mathcal{L}(\mathcal{H}_A \otimes \mathcal{H}_C)$, and the behaviors reduce to those obtained in quantum theory.
\end{theorem}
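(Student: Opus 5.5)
The plan is to show that $\Phi_y$ acts as the identity on a spanning set of $\mathcal{L}(\mathcal{H}_A\otimes\mathcal{H}_C)$, namely on products $P\otimes Q$ of rank-one projectors $P\in\mathcal{L}(\mathcal{H}_A)$ and $Q\in\mathcal{L}(\mathcal{H}_C)$. The only ingredients are positivity (order preservation) of $\Phi_y$ and the local identity constraints \eqref{eq:Phiy-RC}. Linearity then does the rest, since rank-one projectors span $\mathcal{L}(\mathcal{H})$ in finite dimension, so the operators $P\otimes Q$ span $\mathcal{L}(\mathcal{H}_A)\otimes\mathcal{L}(\mathcal{H}_C)=\mathcal{L}(\mathcal{H}_A\otimes\mathcal{H}_C)$.

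Fix $y$ and projectors $P,Q$, and write $E=P\otimes\mathbb{I}_C$, $F=\mathbb{I}_A\otimes Q$ and $Y:=\Phi_y(P\otimes Q)$. These are commuting projections with $EF=P\otimes Q$. The operator inequalities $0\le P\otimes Q\le E$, $P\otimes Q\le F$ and $P\otimes Q\ge E+F-\mathbb{I}$ all hold; the last is $(\mathbb{I}-E)(\mathbb{I}-F)\ge 0$. Applying the positive linear map $\Phi_y$ and using \eqref{eq:Phiy-RC}, which gives $\Phi_y(E)=E$, $\Phi_y(F)=F$ and $\Phi_y(\mathbb{I})=\mathbb{I}$, I obtain
\begin{equation*}
0\le Y\le E,\qquad Y\le F,\qquad Y\ge E+F-\mathbb{I}.
\end{equation*}

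The key step is to squeeze $Y$. From $0\le Y\le E$, the support of $Y$ lies in the range of $E$: for $v\in\ker E$ we have $\langle v,Yv\rangle=0$, and since $Y\ge 0$ this gives $Yv=0$. In particular $Y$ is Hermitian and $Y=EYE$. Since $Y\ge 0$ and $F-Y\ge 0$, the same argument applied to $Y\le F$ gives $Y=FYF$. Hence $Y=(EF)Y(EF)\le EF$, where the inequality comes from compressing $Y\le E$ by $F$. Compressing the lower bound by $EF$ gives $Y=EF\,Y\,EF\ge EF(E+F-\mathbb{I})EF=EF$. Therefore $Y=P\otimes Q$.

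By linearity, $\Phi_y=\mathrm{id}$ for every $y$. Substituting into the trace rule \eqref{eq:gen-trace-rule} gives $P(a,b,c|x,y,z)=\mathrm{tr}[\rho_{ABC}(M_a^x\otimes M_b^y\otimes M_c^z)]$, which is exactly the quantum (Born rule) form.

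The main obstacle I expect is the support argument: justifying that $0\le Y\le E$ forces $Y=EYE$, and ensuring that the one-sided bound $Y\le F$ can be combined with it. Both use only that $Y\ge 0$, which follows from positivity of $\Phi_y$. Trace preservation is not even needed beyond what \eqref{eq:Phiy-RC} already supplies.
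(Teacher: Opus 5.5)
Your proof is correct, and it reaches the conclusion by a different route from the paper's in the key step. The paper passes to the adjoint $\Phi_y^\dagger$, which is positive because $\Phi_y$ is. It observes that the local identity constraints make $\Phi_y^\dagger$ preserve both partial traces, and applies it to a pure product state $|\psi\rangle\langle\psi|\otimes|\phi\rangle\langle\phi|$. The image is positive semidefinite with pure marginals, so by the fact that a pure marginal forces a product, it equals the input; spanning by pure product states then gives $\Phi_y^\dagger=\mathrm{id}$.

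You instead stay in the Heisenberg picture. You use order preservation of the unital positive map $\Phi_y$ to trap $\Phi_y(P\otimes Q)$ between the Fr\'echet-type bounds $E+F-\mathbb{I}$ below and $E$, $F$ above. You then close the sandwich by the support/compression argument, which is in effect the pure-marginal factorization lemma transposed from states to effects.

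What your version buys:
\begin{itemize}
\item It avoids the adjoint and the factorization lemma entirely.
\item It works verbatim for arbitrary, not necessarily rank-one, projections $P,Q$.
\item It makes explicit that only positivity and the three identities $\Phi_y(P\otimes\mathbb{I})=P\otimes\mathbb{I}$, $\Phi_y(\mathbb{I}\otimes Q)=\mathbb{I}\otimes Q$, $\Phi_y(\mathbb{I})=\mathbb{I}$ are used. Trace preservation plays no role, which is also true, though less visibly, of the paper's argument.
\end{itemize}

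What the paper's version buys: its state-side formulation is the one reused almost verbatim in Theorem~\ref{thm:jamq-signal-app}. There, positivity is only known for the effective images $F_y(\sigma_{AC})$ of individual states, not for a positive map acting on effects. To transplant your argument to that setting, you would first extend $F_y$ to a positive linear map $\tilde F_y$ and then run your sandwich argument on its adjoint $\tilde F_y^\dagger$.
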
  
\begin{proof}
First observe that since $\Phi_y$ is a Positive map, its adjoint $\Phi_y^\dagger$ is also a Positive map. This ensures that for any positive semidefinite operator $\rho_{AC}$, the mapped operator $\Phi_y^\dagger(\rho_{AC})$ remains positive semidefinite.

Now, the Relativistic Causality constraints require $\Phi_y$ to be locally identity-preserving as per \eqref{eq:Phiy-RC}.
This gives
\begin{eqnarray}
    \text{tr}[\Phi_y^\dagger(\rho_{AC}) (X_A \otimes \mathbb{I}_C)] &=& \text{tr}[\rho_{AC} \Phi_y(X_A \otimes \mathbb{I}_C)] \nonumber \\
    &=& \text{tr}[\rho_{AC} (X_A \otimes \mathbb{I}_C)],
\end{eqnarray}
showing that $\text{tr}_C[\Phi_y^\dagger(\rho_{AC})] = \text{tr}_C[\rho_{AC}]$. By similar reasoning for $X_C$, we also have $\text{tr}_A[\Phi_y^\dagger(\rho_{AC})] = \text{tr}_A[\rho_{AC}]$. Thus, the adjoint map $\Phi_y^\dagger$ perfectly preserves the marginal reduced density matrices of any bipartite input operator.

Now, consider the action of $\Phi_y^\dagger$ on an arbitrary pure product state $\Pi_{AC} = \vert{}\psi\rangle\langle\psi\vert{}_A \otimes \vert{}\phi\rangle\langle\phi\vert{}_C$. Since $\Phi_y^\dagger$ is a positive map, the resulting operator $\tilde{\rho}_{AC} = \Phi_y^\dagger(\Pi_{AC})$ is positive semidefinite. Applying the marginal preservation property, we obtain that $\text{tr}_C[\tilde{\rho}_{AC}]=\vert{}\psi\rangle\langle\psi\vert{}_A.$
Since the bipartite state has a perfectly pure marginal, the joint state must factor as a tensor product and the output must take the form $\tilde{\rho}_{AC} = \vert{}\psi\rangle\langle\psi\vert{}_A \otimes \sigma_C$,
for some positive semidefinite $\sigma_C$. Similarly applying the marginal preservation property for Alice's subsystem, we obtain that $\text{tr}_A[\tilde{\rho}_{AC}] = \sigma_C = \vert{}\phi\rangle\langle\phi\vert{}_C.$
We therefore see that the map returns the pure product state as
$$\Phi_y^\dagger(\vert{}\psi\rangle\langle\psi\vert{}_A \otimes \vert{}\phi\rangle\langle\phi\vert{}_C) = \vert{}\psi\rangle\langle\psi\vert{}_A \otimes \vert{}\phi\rangle\langle\phi\vert{}_C.$$

Now, recall that the set of all pure product states forms a basis that spans the entire  space $\mathcal{L}(\mathcal{H}_A \otimes \mathcal{H}_C)$. Here we explicitly use the assumption of finite-dimensional Hilbert spaces to guarantee that pure-state projectors ($|\psi \rangle \langle \psi|$) span the real space of Hermitian operators on $\mathcal{H}_A$ and likewise on $\mathcal{H}_C$. Consequently the set of all operators of the form $X_A \otimes Y_C$ with $X_A = X_A^{\dagger}$ and $Y_C = Y_C^{\dagger}$ is spanned by elements of the form $|\psi \rangle \langle \psi | \otimes |\phi \rangle \langle \phi|$. Since the linear map $\Phi_y^\dagger$ acts as the identity on a complete spanning set, it follows by linearity that it must act as the identity on all operators, giving :$$\Phi_y^\dagger \equiv \mathbb{I}_{AC}.$$
Substituting this back into the probability distribution yields:
$$P(a,b,c\vert{}x,y,z) = \text{tr}[\rho_{ABC}(M_{a}^{x}\otimes M_{c}^{z}\otimes M_{b}^{y})],$$ 
the usual Born rule of quantum theory.
\end{proof}



As we have seen, Thm. \ref{thm:jammingmap-pos} rules out any family of jamming maps $\{\Phi_y\}$ that preserve positivity on every quantum state. We are therefore left with the weaker requirement that the maps $\Phi_y$ (more precisely their adjoints) need only preserve positivity on the particular state $\rho_{AC}$ shared by Alice and Charlie. We now show that even this state-dependent version is unphysical - it reintroduces hidden superluminal signaling, but via a different mechanism analogous to that originally identified by Gisin \cite{Gisin90, SBG01} for nonlinear quantum mechanics.


Let us make this precise by presenting the physical argument here, with a formal Thm. \ref{thm:jamq-signal-app} and proof provided in App. \ref{sec:JammingQuantum-signal}. Accordingly, suppose that for every state $\rho_{AC}$ of Alice and Charlie's systems, and every input $y$, there exists a linear, trace-preserving, locally identity-preserving map $\Phi_y^{\rho_{AC}}$ such that $\left(\Phi_y^{\rho_{AC}}\right)^{\dagger}(\rho_{AC}) \geq 0$. Now suppose that the state $\rho_{AC}$ admits two different ensemble decompositions: 
\begin{eqnarray}
\rho_{AC} &=& p {\rho^0}_{AC} + (1-p) \rho^1_{AC}, \nonumber \\
 &=& q \sigma^0_{AC} + (1- q) \sigma^1_{AC},
\end{eqnarray}
for some $p, q \in (0,1)$. A remote party David (D) who holds a purification of $\rho_{AC}$ can choose which ensemble is realized by performing a measurement indexed by $w \in \{0,1\}$ on the purifying system. When the state-dependent map is applied after the ensemble is prepared, the final states seen by Alice and Charlie are $p \left(\Phi_y^{\rho^0_{AC}}\right)^{\dagger}(\rho^{0}_{AC}) + (1-p) \left(\Phi_y^{\rho^1_{AC}}\right)^{\dagger}(\rho^1_{AC})$ when $w=0$, versus $q \left(\Phi_y^{\sigma^0_{AC}}\right)^{\dagger}(\sigma^0_{AC}) + (1-q) \left(\Phi_y^{\sigma^1_{AC}}\right)^{\dagger}(\sigma^1_{AC})$ when $w=1$. In case these two mixtures are different, Alice and Charlie can distinguish which ensemble was prepared and thereby learn David's measurement choice $w$. Since David can be space-like separated from Alice and Charlie, this constitutes superluminal signaling. 

The only way to prevent such signaling is to demand that the final state seen by Alice and Charlie depend only on the average state $\rho_{AC}$, and not on the specific ensemble used to prepare it. In other words, for every input $y$ and every convex decomposition of $\rho_{AC}$ as $\rho_{AC} = p \rho^0_{AC} + (1-p) \rho^1_{AC}$, one must have
\begin{eqnarray}
\label{eq:affine}
    &&\left(\Phi_y^{p \rho^0_{AC} + (1-p) \rho^1_{AC}}\right)^{\dagger}\left(p \rho^0_{AC} + (1-p) \rho^1_{AC} \right) \nonumber \\
    &&= p \left(\Phi_y^{\rho^0_{AC}} \right)^{\dagger}(\rho^0_{AC}) + (1-p) \left(\Phi_y^{\rho^1_{AC}}\right)^{\dagger}(\rho^1_{AC}).
\end{eqnarray}
Note that the same equality must hold for every ensemble. Eq. \ref{eq:affine}  is a strong consistency condition that relates the maps belonging to different states. Combined with the already assumed properties of $\Phi_y$ - namely, linearity, trace preservation, and the requirement that the map act as the identity on every local subsystem - this condition forces a drastic simplification. As shown formally in Lemma \ref{lem:Gisin} and Thm. \ref{thm:jamq-signal-app} in Appendix \ref{sec:JammingQuantum-signal}, the effective evolution is then necessarily the identity on every input state, i.e.,
\begin{equation}
    \left(\Phi_y^{\rho_{AC}}\right)^{\dagger}(\rho_{AC}) = \rho_{AC},
\end{equation}
for every $\rho_{AC}$ and every $y$, and no nontrivial jamming is possible. Therefore, we conclude that no nontrivial jamming extension of quantum correlations is possible - any attempt to introduce nontrivial jamming maps in either state-independent or state-dependent fashion leads to hidden superluminal signaling.

\textit{Implications for security of Device-Independent Protocols against Relativistic adversaries.-}
In a device-independent (DI) protocol, no trust is placed on an internal model for the devices, instead the devices receive inputs and return outputs and security is inferred from the observed conditional distributions $\{P(a,b,c|x,y,z)\}$. DI protocols for cryptographic primitives such as key distribution \cite{BHK05}, random number generation \cite{Colbeck07, Brandao16, RLW25}, bit commitment \cite{FF15} and secret sharing \cite{MBNC20} have been constructed, and the ultimate notion of security considered thus far has been that of a no-signaling (NS) adversary, where the set of behaviors is constrained to lie in the $\mathcal{NS}$ polytope. On the other hand, as we have seen, relativistic causality imposes only those marginal constraints that are compatible with the spacetime locations of the corresponding measurement events - such additional freedom can be exploited by a relativistic adversary. Some implications on the monogamy of nonlocality and DI protocols for random number generation and key distribution were considered in \cite{HR19, SKGH+21}. On the other hand, the question was left open for a different class of primitives termed mistrustful cryptographic protocols in which one or more protocol participants themselves may be dishonest. We illustrate this point with two mistrustful primitives which have been shown to be secure against no-signaling adversaries. 

Consider first the three-prover bit-commitment protocol of Fehr and Fillinger \cite{FF15}. Here, a classical verifier interacts with three separated devices $A, B$ and $C$. The committing party coordinates these devices and uses them to commit to a bit $s \in \{0,1\}$. During the commitment phase, the verifier sends questions to the devices and records their answers. During the opening phase, the committing party announces $s$ together with auxiliary data, and the verifier sends additional test questions to randomly selected devices - the opening is accepted only if the answers satisfy the prescribed consistency checks. The protocol is hiding when the transcript does not reveal $s$, while it is binding when a dishonest committing party cannot produce accepting openings for both $s=0$ and $s=1$ with high probability. It is important to note that the three-prover structure is crucial here - separation prevents the devices from coordinating their answers after receiving the verifier's questions. In a jamming configuration however, one of the devices can influence the joint correlations of the other two without changing either of their marginals. As we show rigorously in Appendix \ref{sec:Jam-BitCommitment}, this extra freedom can be used to coordinate valid opening strings for both committed bits, thereby allowing the committing party to bypass the binding requirement. Specifically, the jamming device's input selects which pair of strings is jointly realized while the single-device statistics and hence the verifier's observed local tests remain unchanged. This allows the adversary to achieve the sum of success probabilities for opening to $0$ and $1$ in $\mathcal{RC}$ theories as 
\begin{equation}
    p_0^{RC} + p_1^{RC} \geq 2 - 2 \epsilon - 2 \delta,
\end{equation}
where $\epsilon$ and $\delta$ quantify the protocol's hiding and soundness errors. In the ideal limit with $\epsilon, \delta \to 0$, the right-hand side approaches $2$ violating the security bound proven against no-signaling adversaries of $p_0^{NS} + p_1^{NS} \leq 1+ \eta$ with $\eta < 1$ in \cite{FF15}. 

An alternative jamming mechanism also breaks DI secret sharing \cite{MBNC20}. In the secret sharing protocol, the honest devices produce correlations certifying that a designated group must collaborate in order to reconstruct a secret, while an external devices should have negligible guessing probability of the secret. Here, an RC jammer is able to alter the relevant joint Alice-Charlie correlations even while maintaining the tested statistics, and is able to supply an output that determines the secret perfectly leading to a guessing probability $P_{guess} = 1$. These examples go beyond the lack of monogamy in RC theories noted in \cite{HR19, SKGH+21} and show that security against no-signaling devices is not sufficient to imply security against relativistic adversaries. Notably, the considered behavior must be specified together with the spacetime locations of its measurement events, the input-output statistics alone are insufficient. 

\textit{Discussion.-}
Relativistic causality in jamming configurations allow for correlations beyond the standard no-signaling set termed $\mathcal{RC}$ correlations. We proved that the analogous extension that keeps a quantum state and allows such jamming necessarily makes the effective dynamics state-dependent. Under the requirements of ensemble equivalence and no-signaling for all preparations, this state dependence reintroduces superluminal signaling. Consequently the remaining consistent possibilities are ordinary quantum theory (without any jamming) or a fully post-quantum $\mathcal{RC}$ theory built on arbitrary Hermitian operators. The latter is completely characterized by a simple operator representation that generalizes the no-signaling framework of Acin et al. \cite{Acin10}.

The relativistic causality framework makes us take geometry-dependent causal constraints seriously. The full no-signaling assumption is an over-idealisation once the actual spacetime locations of the parties are taken into account. Specifically, in DI protocols aiming to be secure against relativistic adversaries, we have shown that the relativistic causal correlations enable perfect attacks on DI bit commitment and secret sharing that bypasses all no-signaling attacks. It reinforces the idea \cite{HR19, SKGH+21} that the standard no-signaling adversary model is insufficient once the causal structure of the experiment is taken into account, and the $\mathcal{RC}$ set of Thm. \ref{thm:RC-rep} is the correct replacement. 

Other approaches have studied the physicality of jamming mechanisms. For instance, \cite{Weilenmann25} studies the constraints on relativistic causal correlations via entropic monogamy and concludes (by alternative means) that it is likely that physical jamming mechanisms must signal (see also \cite{EMHRH25} for a discussion and counter-arguments). Our second result is a complementary no-go result  - any attempt to realize jamming correlations while retaining a quantum state forces state-dependent dynamics that reintroduce superluminal signaling under standard operational assumptions. It has also been noted in axiomatic reconstructions of quantum theory \cite{CDP11} that combining standard density matrices with non-causal evolutions leads to paradoxes. In this context, our second result might be viewed as a rigorous no-go theorem proving exactly why this is the case for relativistically causal jamming scenarios. In general, a theory in which effects such as jamming occur by point-to-region influences, may eventually turn out to be incompatible with stability, renormalisability or a continuum limit, but that incompatibility has to be clearly demonstrated and understood rather than assumed a priori. As such, it is important to understand the physical and information-theoretical principles that rule out jamming mechanisms and their consequences for security of cryptographic protocols that aim for security against relativistic adversaries.

\textit{Acknowledgments.-}
We acknowledge support from the General Research Fund (GRF) Grant No.\ 17307925, and the Research Impact Fund (RIF) Grant No.\ R7035-21.	

\onecolumngrid
\appendix
\counterwithin*{theorem}{section}
\counterwithin*{lemma}{section}
\counterwithin*{definition}{section}
\counterwithin*{example}{section}
\counterwithin*{assumption}{section}

\setcounter{assumption}{0} 
\renewcommand{\theassumption}{\thesection.\arabic{assumption}} 
\setcounter{example}{0} 
\renewcommand{\theexample}{\thesection.\arabic{example}} 
\setcounter{definition}{0} 
\renewcommand{\thedefinition}{\thesection.\arabic{definition}} 
\setcounter{theorem}{0} 
\renewcommand{\thetheorem}{\thesection.\arabic{theorem}} 
\setcounter{lemma}{0} 
\renewcommand{\thelemma}{\thesection.\arabic{lemma}} 
\setcounter{fact}{0} 
\renewcommand{\thefact}{\thesection.\arabic{fact}} 
\newpage

\section{Unified Operator Framework for Relativistic Causal Correlations}
\label{sec:RC}

\begin{theorem}
\label{thm:RC-rep-app}
An arbitrary tripartite behavior $\{P(a,b,c\vert{}x,y,z)\}$ satisfies the Relativistic Causality $\mathcal{RC}$ constraints for a jamming space-time configuration (where the intersection of Alice and Charlie's future light cones is contained within Bob's future light cone) if and only if there exist:  
\begin{itemize}
    \item Local Hilbert spaces $\mathcal{H}_A, \mathcal{H}_B, \mathcal{H}_C$ and sets of local quantum measurements $\{M_a^x\}, \{M_b^y\}, \{M_c^z\}$ (that sum to their respective local identities, e.g., $\sum_b M_b^y = \mathbb{I}_B$), 
    \item  a Hermitian operator $\Omega_{ABC} \in \mathcal{L}(\mathcal{H}_A \otimes \mathcal{H}_B \otimes \mathcal{H}_C)$ of unit trace,
    \item a family of linear, trace-preserving maps $\Phi_y : \mathcal{L}(\mathcal{H}_A \otimes \mathcal{H}_C) \to \mathcal{L}(\mathcal{H}_A \otimes \mathcal{H}_C)$, parameterized by Bob's setting $y$,
\end{itemize}  
such that:
\begin{equation}
    \label{eq:gen-trace-rule-app}
    P(a,b,c\vert{}x,y,z) = \text{tr}\Big[ \Omega_{ABC} \big( \Phi_y(M_a^x \otimes M_c^z) \otimes M_b^y \big) \Big],
\end{equation}
for all $a,b,c,x,y,z$.
And the map $\Phi_y$ for any $y$ is locally identity-preserving, i.e., its action on any local observable $X_A \in \mathcal{L}(\mathcal{H}_A)$ and $X_C \in \mathcal{L}(\mathcal{H}_C)$ satisfies:
\begin{eqnarray}
\label{eq:Phiy-RC}
\Phi_y(X_A \otimes \mathbb{I}_C) &=& X_A \otimes \mathbb{I}_C, \nonumber \\ \Phi_y(\mathbb{I}_A \otimes X_C) &=& \mathbb{I}_A \otimes X_C.
\end{eqnarray}
\end{theorem}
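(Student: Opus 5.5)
The proof has two directions. The forward one ("if") is a direct check; the converse ("only if") is a construction modelled on Ac\'in et al.~\cite{Acin10}, which we extend.

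\emph{Soundness.} Assume the representation \eqref{eq:gen-trace-rule-app} holds with locally identity-preserving $\Phi_y$. We verify the $\mathcal{RC}$ constraints one at a time.
\begin{enumerate}
\item Normalisation: summing over $a,b,c$ and using $\sum_a M_a^x=\mathbb{I}_A$ etc., trace preservation in the form $\Phi_y(\mathbb{I}_{AC})=\mathbb{I}_{AC}$ (which also follows from \eqref{eq:Phiy-RC} with $X_A=\mathbb{I}_A$) and $\text{tr}\,\Omega_{ABC}=1$ gives total probability $1$.
\item Summing over $c$: we get $\Phi_y(M_a^x\otimes\mathbb{I}_C)=M_a^x\otimes\mathbb{I}_C$, so $P(a,b|x,y,z)=\text{tr}[\Omega(M_a^x\otimes M_b^y\otimes\mathbb{I}_C)]$. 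This is independent of $z$.
\item Summing over $a$: symmetrically, $P(b,c|x,y,z)$ is independent of $x$.
\item Summing over $b$: we get $\text{tr}[\Omega_{AC}\Phi_y(M_a^x\otimes M_c^z)]$, which is independent of $y$. This is not required in the jamming configuration, and indeed it may depend on $y$.
\item Single-party marginals of $A$ and $C$: these follow from items 2--3 after a further sum, and are independent of $y$.
\end{enumerate}
These are exactly the $\mathcal{RC}$ equalities, namely all $\mathcal{NS}$ equalities except $y$-independence of the $AC$ joint marginal. Positivity of $P$ is not automatic for Hermitian $\Omega$, so it is an assumption on the behavior, exactly as in \cite{Acin10}.

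\emph{Construction.} Given an $\mathcal{RC}$ behavior, choose $\mathcal{H}_A=\mathbb{C}^{m_A r_A}$ with basis $\ket{a,x}$ and projective measurements $M_a^x=\ket{a,x}\bra{a,x}\otimes$ (the remaining outputs assigned trivially). A simpler route is classical embedding: let $M_a^x$ be diagonal projectors on a space whose basis labels are the full response functions $\vec a=(a_1,\dots,a_{m_A})$, with $M_a^x=\sum_{\vec a:a_x=a}\ket{\vec a}\bra{\vec a}$; do the same for $B$ and $C$.

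Since $\mathcal{RC}$ is a polytope contained in the affine hull of deterministic strategies of the enlarged form, expand $P$ in a quasi-probability decomposition. The plan is as follows.
\begin{enumerate}
\item For each fixed $y$, the slice $P_y(a,c|x,z):=\sum_b P(a,b,c|x,y,z)$ is an $AC$ no-signaling behavior whose marginals are independent of $y$.
\item Pick the $y=1$ data to define $\Omega$. By the Ac\'in theorem, the full behavior $P(\cdot|\cdot,y=1,\cdot)$, which is $\mathcal{NS}$, has a Hermitian representation in this diagonal basis. Better: represent a reference $\mathcal{NS}$ behavior $\tilde P$ obtained by replacing, for every $y$, the $AC$-correlator part with that of $y=1$, and take $\Omega$ to represent $\tilde P$ via affine-hull coefficients (signed weights on deterministic points).
\item Define each $\Phi_y$ to act as the identity on the span of $\{X_A\otimes\mathbb{I}_C,\ \mathbb{I}_A\otimes X_C\}$, and on a complementary subspace (the product of traceless parts) as a linear map chosen so that $\text{tr}[\Omega_{AC}^{(b,y)}\Phi_y(M_a^x\otimes M_c^z)]$ reproduces the $y$-dependent $AC$ correlator terms.
\end{enumerate}
Because $\Phi_y$ need only be linear, this is a finite linear-algebra problem. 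One defines $\Phi_y$ on the traceless-product operators $\hat M_a^x\otimes\hat M_c^z$ by adding a correction $D_{y}(a,x,c,z)\,E$, where $E$ is a fixed traceless-product operator; this requires the correction to be nonzero against the relevant conditional $\Omega$ blocks.

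\emph{Main obstacle.} The hard point is that $\Phi_y$ is independent of $b$, yet the correction must reproduce $y$-dependent correlations jointly with $b$, i.e.\ the tripartite terms. I would handle this by enlarging $\mathcal{H}_A$ with an ancilla register $\ket{k}$ onto which $\Omega$ writes copies of Bob's outcome $b$. The map $\Phi_y$ then reads the label $k=b$ from the ancilla and shifts the $AC$ correlator by the required amount $P(a,b,c|x,y,z)-\tilde P(a,b,c|x,y,z)$. Here $\tilde P$ is chosen so that this difference has vanishing $A$-only and $C$-only marginals for each $b$. Guaranteeing this vanishing is exactly where the $\mathcal{RC}$ constraints, namely $z$-independence of $P(a,b|x,y)$ and $x$-independence of $P(b,c|y,z)$, are used.

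The remaining check is that the correction lives in the traceless-product sector, so that \eqref{eq:Phiy-RC} holds. This follows because a correction supported on terms of the form (traceless on $A$)$\otimes$(traceless on $C$) annihilates under partial traces against identities.
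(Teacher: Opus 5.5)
Your soundness direction matches the paper's. For necessity your architecture also parallels it: an $\mathcal{NS}$ reference represented by a Hermitian $\Omega$ \`a la Ac\'{i}n, then $\Phi_y=\mathrm{id}+(\text{correction})$ with the correction on $\mathcal{L}_0(\mathcal{H}_A)\otimes\mathcal{L}_0(\mathcal{H}_C)$ ($\mathcal{L}_0$ = traceless operators), then an enlargement to fit the $b$-dependence. The paper enlarges via dummy settings and a rank argument; you use an ancilla. Your reference $\tilde P$ (AC part replaced by that of $y=1$) is actually the sounder choice. It lies in the $\mathcal{NS}$ affine space and keeps the AB and BC marginals of $P$. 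By contrast, the paper's $Q=P(a,b|x,y)P(b,c|y,z)/P(b|y)$ has AC marginal $\sum_b P(a,b|x,y)P(b,c|y,z)/P(b|y)$, which can still depend on $y$ (e.g.\ $a=b=c$ for $y=0$, independent uniform bits for $y=1$). A traceless correction also gives trace preservation for free. Drop the aside that $\mathcal{RC}$ lies in the affine hull of deterministic strategies: it is false ($\dim\mathcal{RC}>\dim\mathcal{NS}$), and you don't use it.

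\textbf{The gap is the one non-routine step.} Write $\Omega^{(b,y)}_{AC}:=\mathrm{tr}_B[\Omega(\mathbb{I}_{AC}\otimes M_b^y)]$. You need a $b$-independent $\Delta^{xz}_{ac}(y)\in\mathcal{L}_0(\mathcal{H}_A)\otimes\mathcal{L}_0(\mathcal{H}_C)$ with $\mathrm{tr}[\Omega^{(b,y)}_{AC}\Delta^{xz}_{ac}(y)]=D(a,b,c|x,y,z)$ for \emph{every} $b$. Such a $\Delta$ only sees the $\mathcal{L}_0\otimes\mathcal{L}_0$ component of $\Omega^{(b,y)}_{AC}$, and nothing in your construction makes that component nonzero. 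The ancilla does not supply it. First, $\Omega$ is $y$-independent, so it can copy only Bob's response function $\vec b$, not ``his outcome $b$''. Second, a copy on Alice's side only refines which block of $\Omega$ the correction sees. A read-out by $|k\rangle\langle k|$ is not traceless, and any traceless read-out still needs A--C structure inside the block.

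Concretely, take one binary setting for A and C, Bob always outputting $b=0$, and $P_0(a,c)=\tfrac12\delta_{a,c}$, $P_1(a,c)=\tfrac12\delta_{a\neq c}$. With the $\pm1$-correlator version of your replacement, $\tilde P(a,0,c|0)=\tfrac14$. Take the obvious signed decomposition over $\vec b=(b_0,b_1)$: $w(a,(0,0),c)=\tfrac14$, $w(a,(1,0),c)=-\tfrac14(-1)^{a+c}$, others zero. Even with the ancilla it gives $\Omega^{(b=0,y=0)}_{AC}=Y_A\otimes\mathbb{I}_C$. This pairs to zero with every $\Delta$ satisfying $\mathrm{tr}_C\Delta=0$, yet $D(a,0,c|0)=\tfrac14(-1)^{a+c}\neq0$. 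Moreover, since $\tilde P$ is only a quasi-behaviour, $D(\cdot,b,\cdot)$ can be nonzero where $P(b|y)=0$ (here $b=1$), so normalising by Bob's weight is unavailable too.

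\textbf{The repair is easier than the obstacle you anticipated, because your $D$ factorises.} Write $D=\lambda(b|y)\,\delta_y(a,c|x,z)$ with $\delta_y:=P_y(a,c|x,z)-P_1(a,c|x,z)$ and a fixed $\lambda$ with $\sum_b\lambda(b|y)=1$. For any such $\lambda$, $\tilde P:=P-\lambda\delta_y$ lies in the $\mathcal{NS}$ affine space, keeps the AB and BC marginals, and has AC marginal $P_1$. This holds because $\sum_a\delta_y=\sum_c\delta_y=0$ by $y$-independence of the single-party marginals.

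Now add to $\Omega$ the single measurement-invisible term $X_A\otimes\tau_B\otimes X_C$. Here $X_A,X_C$ are Hermitian and off-diagonal in your response-function bases, so $\mathrm{tr}[X_AM_a^x]=\mathrm{tr}[X_CM_c^z]=0$, normalised by $\mathrm{tr}X_A^2=\mathrm{tr}X_C^2=1$. Take $\tau_B=\sum_{\vec b}\prod_{y'}\lambda(b_{y'}|y')\,|\vec b\rangle\langle\vec b|$, so that $\mathrm{tr}[\tau_BM_b^y]=\lambda(b|y)$. Then $\Omega$ stays Hermitian with unit trace and still reproduces $\tilde P$.

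Put $\Phi_y(T)=T+\ell_y(T)\,X_A\otimes X_C$. Let $\ell_y$ vanish on $\mathcal{L}(\mathcal{H}_A)\otimes\mathbb{I}_C+\mathbb{I}_A\otimes\mathcal{L}(\mathcal{H}_C)$, and set $\ell_y(\hat M_a^x\otimes\hat M_c^z)=\delta_y(a,c|x,z)$ for $a<r_A$, $c<r_C$, with hats denoting traceless parts. This is well defined by linear independence, and correct for the last outcomes because $\sum_a\delta_y=\sum_c\delta_y=0$. The resulting $\Phi_y$ is linear, trace-preserving and locally identity-preserving. The diagonal part of $\Omega$ does not see $X_A\otimes X_C$, so $\mathrm{tr}[\Omega(\Phi_y(M_a^x\otimes M_c^z)\otimes M_b^y)]=\tilde P+\lambda(b|y)\,\delta_y=P$. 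No ancilla and no rank argument are needed.
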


\begin{proof}
Sufficiency (If). We first prove that if the behavior is generated as per \eqref{eq:gen-trace-rule-app}, it satisfies all the RC constraints required to prevent causal loops \cite{HR19}.  

We can readily check that:
\begin{eqnarray}
 \sum_a P(a,b,c|x,y,z) &=& \sum_a \text{tr}\Big( \Omega_{ABC} \big( \Phi_y(M_a^x \otimes M_c^z) \otimes M_b^y \big) \Big) \nonumber \\
 &=& \text{tr}\Big( \Omega_{ABC} \big( \Phi_y(\mathbb{I}_A \otimes M_c^z) \otimes M_b^y \big) \Big) \nonumber \\
 &=& \text{tr}\Big( \Omega_{ABC} \big( \mathbb{I}_A \otimes M_c^z \otimes M_b^y \big) \Big) = P(b,c|y,z).
\end{eqnarray}
Here we have used that $\sum_a M_a^x = \mathbb{I}_A$ and that $\Phi_y(\mathbb{I}_A \otimes M_c^z) = \mathbb{I}_A \otimes M_c^z$. The result is manifestly independent of $x$, precluding Alice from signaling to Bob and Charlie. By symmetry, summing over Charlie's outcomes yields $P(a,b\vert{}x,y)$, which is independent of $z$. 

Similarly, we check that 
\begin{eqnarray}
    \sum_{b,c} P(a,b,c|x,y,z) &=& \text{tr}\Big( \Omega_{ABC} \big( \Phi_y(M_a^x \otimes \mathbb{I}_C) \otimes \mathbb{I}_B \big) \Big) \nonumber \\
    &=&\text{tr}\Big( \Omega_{ABC} \big( M_a^x \otimes \mathbb{I}_C \otimes \mathbb{I}_B \big) \Big) = P(a|x).
\end{eqnarray}
Here we have again used that $\Phi_y(M_a^x \otimes \mathbb{I}_C) = M_a^x \otimes \mathbb{I}_C$. We verify that this marginal is independent of Bob and Charlie's settings $y$ and $z$. By symmetry, Charlie's marginal distribution is independent of Alice and Bob's settings $x$ and $y$. Furthermore,  
\begin{eqnarray}
    \sum_{a,c} P(a,b,c|x,y,z) &=& \sum_{a,c} \text{tr}\Big( \Omega_{ABC} \big( \Phi_y(M_a^x \otimes M_c^z) \otimes M_b^y \big) \Big) \nonumber \\
    &=&\text{tr}\Big( \Omega_{ABC} \big( \mathbb{I}_A \otimes \mathbb{I}_C \otimes M_b^y \big) \Big) = P(b|y). 
\end{eqnarray}
Here again we have used that $\Phi_y(M_a^x \otimes \mathbb{I}_C) = M_a^x \otimes \mathbb{I}_C$. We verify that Bob's marginal distribution is independent of Alice and Charlie's settings $x$ and $z$. 

Now, we have that 
\begin{eqnarray}
    P(a,c\vert{}x,y,z) &=& \text{tr}\Big( \Omega_{ABC} \big( \Phi_y(M_a^x \otimes M_c^z) \otimes \mathbb{I}_B \big) \Big).
\end{eqnarray}
Here $\Phi_y$ maps $M_a^x \otimes M_c^z$ to a $y$-dependent operator, so that the joint AC correlation depends on $y$, as per the jamming condition. We conclude that the generated behavior satisfies all the RC constraints and at the same time allows for a jamming influence from Bob to the joint correlations between Alice and Charlie's systems. 

Necessity (Only If). We now proceed to show that any behavior $P(a,b,c\vert{}x,y,z)$ satisfying the RC constraints can be written in terms of \eqref{eq:gen-trace-rule-app} with a suitable operator $\Omega_{ABC}$ and a linear trace-preserving map $\Phi_y$. We follow the proof of \cite{Acin10} here,  we first choose local Hilbert spaces $\mathcal{H}_A, \mathcal{H}_B, \mathcal{H}_C$ and for each party we define a set of local quantum measurements (POVMs) such that for all settings, the operators excluding the final outcome—e.g., $\{M_a^x\}_{a<r_A}$—along with the local identity $\mathbb{I}_A$, form a linearly independent set. The final outcome is determined by $M_{r_A}^x = \mathbb{I}_A - \sum_{a<r_A} M_a^x$.  Now, since these primal operators are linearly independent, there exists a uniquely defined dual basis for each party \cite{Acin10}. For Alice, this is $\{\tilde{\Lambda}_A, \tilde{M}_a^x\}$, which satisfies the orthogonality conditions:
  $$\text{tr}(\tilde{M}_a^x M_{a'}^{x'}) = \delta_{aa'}\delta_{xx'}, \quad  \quad \text{tr}(\tilde{\Lambda}_A \mathbb{I}_A) = 1,$$$$\text{tr}(\tilde{M}_a^x \mathbb{I}_A) = 0, \quad  \quad \text{tr}(\tilde{\Lambda}_A M_a^x) = 0.$$ 
We define similar primal/dual bases for Bob ($\{\mathbb{I}_B, M_b^y\}$, $\{\tilde{\Lambda}_B, \tilde{M}_b^y\}$) and Charlie ($\{\mathbb{I}_C, M_c^z\}$, $\{\tilde{\Lambda}_C, \tilde{M}_c^z\}$).

Now define a fully non-signaling reference behavior $Q(a,b,c\vert{}x,y,z)$ that shares all the 1-body and 2-body marginals of $P(a,b,c\vert{}x,y,z)$ that involve Bob. Specifically consider the behavior 
\begin{equation}
    Q(a,b,c\vert{}x,y,z) = \frac{P(a,b\vert{}x,y)P(b,c\vert{}y,z)}{ P(b\vert{}y)},
\end{equation} 
defined for all $b, y$ such that $P(b \vert{}y) \neq 0$ and utilized in the proofs in \cite{SKGH+21, Fine82}. Since $Q$ is explicitly no-signaling, by the result of \cite{Acin10} we can map it into an operator $\Omega_{ABC}$ of unit trace using the expansion:
\begin{eqnarray}
\label{eq:Omega-Acin}
\Omega_{ABC} &=& \sum_{a,b,c,x,y,z} Q(a,b,c\vert{}x,y,z) \; \tilde{M}_a^x \otimes \tilde{M}_b^y \otimes \tilde{M}_c^z + \sum_{a,b,x,y} Q(a,b\vert{}x,y) \; \tilde{M}_a^x \otimes \tilde{M}_b^y \otimes \tilde{\Lambda}_C + \dots + \tilde{\Lambda}_A \otimes \tilde{\Lambda}_B \otimes \tilde{\Lambda}_C. \nonumber \\
\end{eqnarray}
We check that by the orthogonality of the dual basis, we have that $\text{tr}(\Omega_{ABC} \left[M_a^x \otimes M_b^y \otimes M_c^z)) = Q(a,b,c\vert{}x,y,z \right]$. 

To account for the additional jamming influence, we define the linear map $\Phi_y$ in terms of a perturbation as:
$$\Phi_y(M_a^x \otimes M_c^z) = M_a^x \otimes M_c^z + \Delta_{ac}^{xz}(y).$$
We require this perturbation operator to obey the constraint:
\begin{eqnarray}
    \text{tr}\Big[ \Omega_{ABC} \big( \Phi_y(M_a^x \otimes M_c^z) \otimes M_b^y \big) \Big]  
    &=& \text{tr}\Big[ \Omega_{ABC} \big( (M_a^x \otimes M_c^z + \Delta_{ac}^{xz}(y)) \otimes M_b^y \big) \Big] \nonumber \\
    &=& P(a,b,c\vert{}x,y,z).
\end{eqnarray}
Since the unperturbed trace yields $Q$, we obtain the explicit constraint for the perturbation as:
\begin{eqnarray}
\label{eq:pert-const}
    \text{tr}\Big[ \Omega_{ABC} \big( \Delta_{ac}^{xz}(y) \otimes M_b^y \big) \Big] = P(a,b,c\vert{}x,y,z) - Q(a,b,c\vert{}x,y,z) \equiv D(a,b,c\vert{}x,y,z). \nonumber \\
\end{eqnarray}
Finally, to guarantee that $\Phi_y(X_A \otimes \mathbb{I}_C) = X_A \otimes \mathbb{I}_C$, the perturbation $\Delta_{ac}^{xz}(y)$ must vanish when traced against any local identity. We enforce this algebraically by expanding $\Delta$ in the non-identity primal matrices:
\begin{equation}
\label{eq:Delta}
\Delta_{ac}^{xz}(y) = \sum_{x', a'<r_A} \sum_{z', c'<r_C} \Gamma_{a'c'}^{x'z'}(a,c,x,z,y) \; M_{a'}^{x'} \otimes M_{c'}^{z'}.
\end{equation}
Since $M_{a'}^{x'}$ and $M_{c'}^{z'}$ are orthogonal to the dual identity elements $\tilde{\Lambda}_A$ and $\tilde{\Lambda}_C$, this guarantees $\Delta$ only acts on the bipartite correlation subspace, isolating Bob's jamming from Alice and Charlie's local marginals. 

We can now solve for the coefficients $\Gamma$, by substituting the primal expansion of $\Delta$ in \eqref{eq:Delta} into the constraint \eqref{eq:pert-const}. 
Since $\Omega_{ABC}$ in \eqref{eq:Omega-Acin} is written in the dual basis, tracing it with the primal matrices $M_{a'}^{x'} \otimes M_b^y \otimes M_{c'}^{z'}$ directly extracts the coefficients $Q(a',b,c'\vert{}x',y,z')$ and we obtain:
\begin{eqnarray}
\label{eq:coeffs}
    \sum_{a', c', x', z'} \Gamma_{a'c'}^{x'z'}(a,c,x,z,y) \cdot Q(a',b,c'\vert{}x',y,z') = D(a,b,c\vert{}x,y,z). 
\end{eqnarray}
This forms a system of linear equations. Specifically for a fixed $(a,c,x,y,z)$, Eq.\eqref{eq:coeffs} represents a system of linear equations evaluated across all of Bob's non-deterministic outcomes $b < r_B$. We write this as a matrix equation $\mathbf{Q} \; \vec{\Gamma} = \vec{D}$, where $\vec{D}$ is a vector of length $(r_B - 1)$, $\vec{\Gamma}$ is a vector of length $N = m_A(r_A - 1) \times m_C(r_C - 1)$, and $\mathbf{Q}$ is a matrix of dimensions $(r_B - 1) \times N$ where each row corresponds to a specific outcome $b$ and each column corresponds to an evaluation of the reference behavior $Q$ at the basis variables $(a', c', x', z')$.

To guarantee that an exact solution for the vector $\vec{\Gamma}$ exists, the matrix $Q$ must possess full row rank (rank $r_B - 1$). If $Q$ is built using only the physical measurement settings $x$ and $z$, this full rank is not guaranteed. To force full row rank without altering the physical behavior, the local measurement sets for Alice and Charlie are expanded to include $K = r_B - 1$ auxiliary dummy settings: $\{x_1^*, \dots, x_K^*\}$ and $\{z_1^*, \dots, z_K^*\}$ with binary outcomes $a', c' \in \{0,1\}$. Since these are mathematical constructs, their reference distribution $Q$ can be engineered arbitrarily, provided it remains non-signaling and preserves Bob's physical marginal $P(b\vert{}y)$. To be specific, we define these as follows.  For each dummy input pair corresponding to $k \in \{1, \dots, r_B - 1\}$, we define the joint reference probabilities as follows: $$Q(0, b, 0 \vert{} x^*_k, y, z^*_k) = \frac{1}{4}P(b\vert{}y) + \epsilon \cdot P(b\vert{}y) \delta_{bk},$$ $$Q(1, b, 0 \vert{} x^*_k, y, z^*_k) = \frac{1}{4}P(b\vert{}y) - \epsilon \cdot P(b\vert{}y) \delta_{bk}$$$$Q(0, b, 1 \vert{} x^*_k, y, z^*_k) = \frac{1}{4}P(b\vert{}y) - \epsilon \cdot P(b\vert{}y) \delta_{bk},$$ $$Q(1, b, 1 \vert{} x^*_k, y, z^*_k) = \frac{1}{4}P(b\vert{}y) + \epsilon \cdot P(b\vert{}y) \delta_{bk},$$
where $\epsilon > 0$ is a sufficiently small constant (e.g., $\epsilon \le 1/4$) ensuring all probabilities remain non-negative.

We verify that this block is fully non-signaling by computing: $$Q(0, b, c' \vert{} x^*_k, y, z^*_k) + Q(1, b, c' \vert{} x^*_k, y, z^*_k) = \frac{1}{2}P(b\vert{}y).$$ This is independent of Alice's setting $x^*_k$, meaning Alice cannot signal to Bob or Charlie. Similarly we compute $$Q(a', b, 0 \vert{} x^*_k, y, z^*_k) + Q(a', b, 1 \vert{} x^*_k, y, z^*_k) = \frac{1}{2}P(b\vert{}y).$$ This is independent of Charlie's setting $z^*_k$. Finally, Bob's isolated marginal (obtained by summing over $a'$ and $c'$) yields exactly $P(b\vert{}y)$.

We now isolate a specific submatrix of $\mathbf{Q}$ by looking at the columns corresponding to the outcomes $a'=0, c'=0$ evaluated at the correlated dummy settings $(x^*_k, z^*_k)$. Let column $k$ (for $k = 1, \dots, r_B - 1$) be the vector $\vec{v}_k$ representing the equation coefficients evaluated across all rows $b$: $$\vec{v}_k = \Big[ Q(0, b, 0 \vert{} x^*_k, y, z^*_k) \Big]_{b=1}^{r_B-1} = \frac{1}{4} \vec{P}(b\vert{}y) + \epsilon P(k\vert{}y) \hat{e}_k,$$
where $\hat{e}_k$ is the standard basis vector with a $1$ at position $k$ and $0$ elsewhere. The square $(r_B - 1) \times (r_B - 1)$ submatrix formed by these columns is:
\begin{eqnarray}
\mathbf{Q}_{\text{sub}}
    = \frac{1}{4} \vec{P}(b\vert{}y) \cdot \vec{1}^T + \epsilon \cdot \text{diag}(P(1\vert{}y), P(2\vert{}y), \dots, P(r_B-1\vert{}y)). 
\end{eqnarray}

This $\mathbf{Q}_{\text{sub}}$ is a diagonal matrix of positive entries (since $P(b\vert{}y) > 0$) combined with a rank-1 update (the outer product $\frac{1}{4}\vec{P}(b\vert{}y)\mathbf{1}^T$). By the Matrix Determinant Lemma, this guarantees that $Q_{sub}$ is invertible. Now since $\mathbf{Q}$ contains a fully invertible $(r_B - 1) \times (r_B - 1)$ submatrix, the entire matrix $\mathbf{Q}$ is  guaranteed to possess full row rank. By comparing the rank of the coefficient matrix and the augmented matrix, 
we see that the surjective linear map $\mathbf{Q} \; \vec{\Gamma} = \vec{D}$ will always yield at least one exact solution vector $\vec{\Gamma}$. The coefficients $\Gamma_{a'c'}^{x'z'}$ are obtained simply by computing the pseudo-inverse (or directly inverting the submatrix and setting all other coefficients to zero):
$$\vec{\Gamma}_{\text{sub}} = \mathbf{Q}_{\text{sub}}^{-1} \vec{D}.$$

Finally, we verify Bob's final outcome ($b=r_B$). Summing the constraint in \eqref{eq:pert-const} over all $b$ yields:
\begin{eqnarray}
    &&\sum_b \text{tr}\Big[ \Omega_{ABC} \big( \Delta_{ac}^{xz}(y) \otimes M_b^y \big) \Big] \nonumber \\
    &&=\text{tr}\Big[ \Omega_{ABC} \big( \Delta_{ac}^{xz}(y) \otimes \mathbb{I}_B \big) \Big] \nonumber \\
    &&= \sum_b D(a,b,c\vert{}x,y,z) = P(a,c\vert{}x,y,z) - Q(a,c\vert{}x,z). 
\end{eqnarray}
This maps Bob's input parameter $y$ onto the joint AC marginal, indicating the jamming.

\end{proof}

Remark that in the limit when $\Phi_y = \mathbb{I}_{AC}$ (the identity superoperator), the Thm. \ref{thm:RC-rep-app} recovers the framework of \cite{Acin10} for fully no-signalling correlations. 

As an illustrative example of Thm. \ref{thm:RC-rep-app}, we identify the operator $\Omega_{ABC}$, the measurements $\{M^x_a\}$, $\{M^y_b\}$ and $\{M^z_c\}$, and the maps $\Phi_y$ that produce the paradigmatic RC box given as Example 4 in \cite{HR19}. Specifically, this example considers a Bell scenario with three players Alice, Bob and Charlie in the jamming configuration. Each player performs one of two measurements with binary outputs resulting in the box $P(a,b,c\vert{}x,y,z)$. The box is characterised by deterministic output $b=0$ for both inputs $y=0, 1$ of Bob, with the marginal $P(a,c\vert{x}x,y,z)$ being a local box that returns uniformly random outcomes $a=c$ for all $x,z$ when Bob inputs $y=0$, and the marginal $P(a,c\vert{}x,y,z)$ being a Popescu-Rohrlich (PR) box \cite{PR94} that obeys $a \oplus c = x \cdot z$ when Bob inputs $y=1$. 

To build this behavior within the framework of Thm. \ref{thm:RC-rep-app}, we formally present the following example.

\begin{example}
Let the Hilbert spaces for Alice, Bob, and Charlie correspond to qubits: $\mathcal{H}_A = \mathcal{H}_B = \mathcal{H}_C = \mathbb{C}^2$.  As the operator $\Omega_{ABC}$, consider 
\begin{eqnarray}
    \Omega_{ABC} = \vert{}\Phi^+\rangle\langle\Phi^+\vert{}_{AC} \otimes \vert{}0\rangle\langle0\vert{}_B.
\end{eqnarray}
Specifically, the $AC$ state is 
$$\rho_{AC} = \frac{1}{4}(\mathbb{I} \otimes \mathbb{I} + \sigma_x \otimes \sigma_x - \sigma_y \otimes \sigma_y + \sigma_z \otimes \sigma_z).$$
As the local measurement operators, we choose the following. Alice measures in the standard Pauli $Z$ and $X$ bases: $M_a^0 = \frac{1}{2}(\mathbb{I}_A + (-1)^a \sigma_z)$, $M_a^1 = \frac{1}{2}(\mathbb{I}_A + (-1)^a \sigma_x).$ Charlie measures in a rotated bases given by  $M_c^0 = \frac{1}{2}(\mathbb{I}_C + (-1)^c \frac{\sigma_z + \sigma_x}{\sqrt{2}})$, $M_c^1 = \frac{1}{2}(\mathbb{I}_C + (-1)^c \frac{\sigma_z - \sigma_x}{\sqrt{2}})$. Since Bob's output is deterministically $b=0$ for all inputs, his measurement operators are trivial for both settings $y=0$ and $y=1$, namely $M_0^y = \mathbb{I}_B$, $M_1^y = 0$. 

The jamming maps $\Phi_y: \mathcal{L}(\mathcal{H}_A \otimes \mathcal{H}_C) \rightarrow \mathcal{L}(\mathcal{H}_A \otimes \mathcal{H}_C)$ are required to be linear, trace-preserving, and locally identity-preserving. We define these by their action on the basis of joint Pauli operators $\sigma_i \otimes \sigma_j$. We set when $y=0$:
\begin{eqnarray}
    \Phi_0(\sigma_z \otimes \sigma_z) &=& \sqrt{2} \sigma_z \otimes \sigma_z, \nonumber \\ 
    \Phi_0(\sigma_x \otimes \sigma_z) &=& \sqrt{2} \sigma_x \otimes \sigma_x, \nonumber \\
    \Phi_0(\sigma_y \otimes \sigma_y) &=& 0.
\end{eqnarray}
With all other Pauli basis operators mapping to themselves (or zero as the degree of freedom involved in the characterization). When $y=1$, we set:
\begin{eqnarray}
    \Phi_1(\sigma_z \otimes \sigma_z) &=& \sqrt{2} \sigma_z \otimes \sigma_z, \nonumber \\
    \Phi_1(\sigma_x \otimes \sigma_x) &=& \sqrt{2} \sigma_x \otimes \sigma_x, \nonumber \\
    \Phi_1(\sigma_y \otimes \sigma_y) &=& 0.
\end{eqnarray}
Again, all other Pauli basis operators map to themselves or to zero. 

We can verify that the exact RC behavior is achieved by computing 
$$P(a,b,c\vert{}x,y,z) = \text{tr}[\Omega_{ABC} (\Phi_y(M_a^x \otimes M_c^z) \otimes M_b^y)].$$
Since $b=0$ is guaranteed, Bob's subspace trivially traces to 1. Using the cyclic property of the trace, we can apply the adjoint map $\Phi_y^\dagger$ to the $AC$ state instead:
$$P(a,0,c\vert{}x,y,z) = \text{tr}[\Phi_y^\dagger(\rho_{AC}) (M_a^x \otimes M_c^z)].$$ For $y=0$, we have that $\Phi_0^{\dagger}(\rho_{AC}) =: \tilde{\rho}_0$ is given by
$$\tilde{\rho}_0 = \frac{1}{4}(\mathbb{I} \otimes \mathbb{I} + \sqrt{2}\sigma_z \otimes \sigma_z + \sqrt{2}\sigma_x \otimes \sigma_z).$$
Evaluating the expectation values of Alice and Charlie's observables on this state yields exactly $1$ for all four combinations of $x, z \in \{0, 1\}$, so that $a=c$ with probability 1, reproducing the local box from Example 4. For $y=1$, we have that $\Phi_1^{\dagger}(\rho_{AC}) =: \tilde{\rho}_1$ is given by 
$$\tilde{\rho}_1 = \frac{1}{4}(\mathbb{I} \otimes \mathbb{I} + \sqrt{2}\sigma_z \otimes \sigma_z + \sqrt{2}\sigma_x \otimes \sigma_x).$$
Evaluating the expectation values on this state yields $1$ for $(x,z) \in \{(0,0), (0,1), (1,0)\}$ and $-1$ for $(x,z) = (1,1)$. This reproduces the $a \oplus c = x \cdot z$ PR box correlations required by Example 4 from \cite{HR19}.
\end{example}

The above theorem applies to the paradigmatic three-party jamming configuration highlighted in \cite{GPR96, HR19, VC22}. The definition of relativistic causal correlations readily generalizes to the $n$-party configuration. 

\begin{definition}(The $n$-Party $\mathcal{RC}$ Set).
\label{def:n-party-RC}
Let $N=\{1, 2, \dots, n\}$ be a set of $n$ space-like separated parties in a spacetime configuration $\mathcal{M}$. Let $p_i$ and $q_i$ denote the input and output events for party $i\in N$, respectively.
For any subset of parties $S \subseteq N$, the Jamming Set $\mathcal{J}_\mathcal{M}(S)\subseteq N\setminus S$ is defined as the set of parties $j$ satisfying: $\bigcap_{i\in S} J^+(q_i)\subseteq J^+(p_j)$. A behavior $P(\vec{a}\vert{}\vec{x})$ belongs to $\mathcal{RC}$ if and only if there exist:
\begin{itemize}
\item Finite-dimensional Hilbert spaces $\{\mathcal{H}_i\}_{i\in N}$, with $\mathcal{H}_N=\bigotimes_{i\in N}\mathcal{H}_i$.

\item For each $i\in N$ and input $x_i$, a set of POVM elements $\{M_{a_i}^{x_i}\}\subset\mathcal{L}(\mathcal{H}_i)$ satisfying $M_{a_i}^{x_i}\geq 0$ and $\sum_{a_i} M_{a_i}^{x_i}=I_{\mathcal{H}_i}$.

\item A Hermitian operator $\Omega\in\mathcal{L}(\mathcal{H}_N)$ of unit trace.
\end{itemize}
Further, there exists a hierarchy of linear, trace-preserving maps defined for all subsets $S\subseteq N$: $\Phi^{(S)}_{\vec{u}}:\mathcal{L}(\mathcal{H}_S)\to\mathcal{L}(\mathcal{H}_S)$
where the parameter $\vec{u}=\vec{x}_{S\cup\mathcal{J}_\mathcal{M}(S)}$ depends only on the inputs of the parties in $S$ and their valid jammers. This hierarchy satisfies the following three axioms:

\begin{enumerate}
\item For every proper subset inclusion $S\subsetneq T\subseteq N$ and every operator $X_S\in\mathcal{L}(\mathcal{H}_S)$, the map acting on $T$ must perfectly intertwine with the map acting on $S$. As an equality of operators on $\mathcal{H}_T$, it must hold that:
\begin{equation}
    \Phi^{(T)}_{\vec{x}_{T\cup\mathcal{J}_\mathcal{M}(T)}}\bigl(X_S\otimes I_{T\setminus S}\bigr) = \Phi^{(S)}_{\vec{x}_{S\cup\mathcal{J}_\mathcal{M}(S)}}(X_S)\otimes I_{T\setminus S}.
\end{equation}

\item The map must fix all single-party operators that are not acting as jammers for the rest of the subset. For every $S\subseteq N$, every $i\in S$, and every $X_i\in\mathcal{L}(\mathcal{H}_i)$, if $i\notin\mathcal{J}_\mathcal{M}(S\setminus\{i\})$, then: 
\begin{equation}
    \Phi^{(S)}_{\vec{x}_{S\cup\mathcal{J}_\mathcal{M}(S)}}\bigl(X_i\otimes I_{S\setminus\{i\}}\bigr) = X_i\otimes I_{S\setminus\{i\}}.
\end{equation}

\item If the spacetime configuration permits no jamming at all ($\mathcal{J}_\mathcal{M}(S)=\emptyset$ for all $S$), then the global map reduces to the identity: 
\begin{equation}
    \Phi^{(N)}_{\vec{x}} = \operatorname{id}_{\mathcal{L}(\mathcal{H}_N)} \quad \forall\vec{x}.
\end{equation}
\end{enumerate}
The $n$-party relativistic causal correlations are written as:
\begin{equation}
    P(\vec{a}\vert{}\vec{x}) = \operatorname{tr}\Bigl(\Omega\cdot\Phi^{(N)}_{\vec{x}}\Bigl(\bigotimes_{i=1}^n M_{a_i}^{x_i}\Bigr)\Bigr).
\end{equation}
\end{definition}

In simple terms, in an $n$-party configuration, the first axiom on the jamming maps ensures that the correlations of a smaller subgroup are consistent and the jamming effects on that subgroup are self-contained. That is, it guarantees that the subgroup's measurement operators only depend upon the inputs within the subgroup and the specific parties explicitly permitted by the spacetime configuration to jam them. 
The second axiom guarantees that the measurement operators of any subset of parties is untouched by the jamming map, unless the spacetime configuration is such that an outside party is permitted to jam them. Furthermore, even an active jammer's own local measurement operator is unaffected by their jamming map. The third axiom simply states that if the spacetime configuration forbids jamming entirely, then the jamming maps reduce to the identity and the system behaves exactly like standard quantum mechanics. 

\section{Jamming extension of Quantum Correlations leads to Hidden Superluminal Signaling}
\label{sec:JammingQuantum-signal}

In this section, we examine the physicality of the intermediate set of quantum-like correlations provided in the main text and termed $\mathcal{RCQ}$. We recall the definition here for convenience, noting that the distinction with the general $\mathcal{RC}$ correlations of the previous section arises here from the requirement that the state be a positive semidefinite Hermitian operator $\rho_{ABC}$ of unit trace. Again, we stick to the three-party picture here for convenience, the general $n$-party case following immediately from Def. \ref{def:n-party-RC} by imposing the requirement of positive semidefiniteness for the state. As in the rest of the paper, in the following definition, $\mathcal{L}(\mathcal{H})$ denotes the space of linear operators on the Hilbert space $\mathcal{H}$, $\text{Herm}(\mathcal{H})$ denotes the space of all Hermitian (self-adjoint) operators acting on the Hilbert space $\mathcal{H}$, and $\mathcal{D}(\mathcal{H})$ denotes the space of density operators on the Hilbert space $\mathcal{H}$.

\begin{definition}
\label{def:RCQ-app}
    A tripartite behavior $P(a,b,c|x,y,z)$ belongs to the set of relativistically-causal quantum ($\mathcal{RCQ}$) correlations for a jamming space-time configuration if and only if there exist:
    \begin{itemize}
      \item local finite-dimensional Hilbert spaces $\mathcal{H}_A, \mathcal{H}_B, \mathcal{H}_C$ and sets of local quantum measurements $\{M_a^x\}, \{M_b^y\}, \{M_c^z\}$ (that sum to their respective local identities, e.g., $\sum_b M_b^y = \mathbb{I}_B$), 
    \item  a positive semidefinite Hermitian operator $\rho_{ABC} \geq 0$ in $\mathcal{L}(\mathcal{H}_A \otimes \mathcal{H}_B \otimes \mathcal{H}_C)$ of unit trace ($\text{tr}(\rho_{ABC}) = 1)$,
    \item a family of linear maps $\Phi_y : \mathcal{L}(\mathcal{H}_A \otimes \mathcal{H}_C) \to \mathcal{L}(\mathcal{H}_A \otimes \mathcal{H}_C)$, parameterized by Bob's setting $y$, that are trace-preserving $\Phi_y^{\dagger}(\mathbb{I}_A \otimes \mathbb{I}_C) = \mathbb{I}_A \otimes \mathbb{I}_C$, and preserve local identities, i.e., their action on any local observable $X_A \in \mathcal{L}(\mathcal{H}_A)$ and $X_C \in \mathcal{L}(\mathcal{H}_C)$ satisfies:
$$\Phi_y(X_A \otimes \mathbb{I}_C) = X_A \otimes \mathbb{I}_C,$$
$$\Phi_y(\mathbb{I}_A \otimes X_C) = \mathbb{I}_A \otimes X_C,$$ and furthermore, the maps $\Phi_y$ preserve positivity of the initial state, i.e., $\Phi_y^{\dagger}(\rho_{AC}) \succeq 0$,
\end{itemize}  
such that:
\begin{equation}
    \label{eq:gen-trace-rule}
    P(a,b,c\vert{}x,y,z) = \text{tr}\Big[ \rho_{ABC} \big( \Phi_y(M_a^x \otimes M_c^z) \otimes M_b^y \big) \Big].
\end{equation}
\end{definition}

The set $\mathcal{RCQ}$ is a strict superset of the usual quantum set in spacetime configurations which permit jamming. As a paradigmatic example, we present the following. 
\begin{example}
Suppose that Alice, Bob and Charlie in the jamming configuration share a 3-qubit state   
$$\rho_{ABC} = \vert{}\Phi^+\rangle\langle\Phi^+\vert{}_{AC} \otimes \vert{}0\rangle\langle0\vert{}_B,$$
where $\vert{}\Phi^+\rangle = \frac{\vert{}00\rangle + \vert{}11\rangle}{\sqrt{2}}$,
and perform standard dichotomic projective measurements given as follows. 
\begin{itemize}
    \item Alice ($x \in \{0, 1\}$): $M_{a}^{x} = \frac{\mathbb{I} + (-1)^a A_x}{2}$, where $A_0 = \sigma_z$ and $A_1 = \sigma_x$,
    \item Bob ($y \in \{0, 1\}$): $M_{b}^{y} = \frac{\mathbb{I} + (-1)^b B_y}{2}$, where $B_0 = \sigma_z$ and $B_1 = \sigma_x$,
    \item Charlie ($z \in \{0, 1\}$): $M_{c}^{z} = \frac{\mathbb{I} + (-1)^c C_z}{2}$, where $C_0 = \frac{\sigma_z + \sigma_x}{\sqrt{2}}$ and $C_1 = \frac{\sigma_z - \sigma_x}{\sqrt{2}}$.
\end{itemize}
The jamming maps $\Phi_y$ are given as follows. Let $\sigma_0 \equiv \mathbb{I}$, and let $\{\sigma_\mu \otimes \sigma_\nu\}_{\mu,\nu=0}^3$ be the orthogonal Pauli basis. For $y = 0$, Bob chooses not to jam $$\Phi_0 \equiv \mathbb{I}_{AC}.$$
For $y = 1$, we define $\Phi_1$ uniquely by its linear extension on the basis vectors as 
$$\Phi_1(\sigma_\mu \otimes \sigma_\nu) = \sigma_\mu \otimes \sigma_\nu \quad \text{if } \mu=0 \text{ or } \nu=0,$$
$$\Phi_1(\sigma_\mu \otimes \sigma_\nu) = 0 \quad \text{if } \mu \neq 0 \text{ and } \nu \neq 0.$$
Since $\Phi_1$ maps Hermitian basis operators to themselves or zero, it is self-adjoint ($\Phi_1^\dagger = \Phi_1$).
Furthermore, for any Alice observable $X_A = \sum_\mu c_\mu \sigma_\mu$, we evaluate:
$$\Phi_1(X_A \otimes \mathbb{I}) = \sum_\mu c_\mu \Phi_1(\sigma_\mu \otimes \sigma_0).$$
Because $\nu = 0$, the map preserves the term: $\sum_\mu c_\mu (\sigma_\mu \otimes \sigma_0) = X_A \otimes \mathbb{I}$.
Similar considerations hold for $\mathbb{I} \otimes X_C$. 
We compute the joint probability distributions of outcomes given inputs as
\begin{eqnarray}
    P(a,b,c\vert{}x,y,z) &=& \text{tr}[\rho_{ABC}(\Phi_y(M_a^x \otimes M_c^z) \otimes M_b^y)] \nonumber \\
    &=& \text{tr}[\Phi_y^\dagger(\rho_{AC}) (M_a^x \otimes M_c^z)] \cdot \text{tr}[\vert{}0\rangle\langle0\vert{}_B M_b^y].\nonumber \\
\end{eqnarray}
We evaluate the effective state $\tilde{\rho}_{AC\vert{}y} = \Phi_y^\dagger(\vert{}\Phi^+\rangle\langle\Phi^+\vert{})$.
For $y = 0$: $\tilde{\rho}_{AC\vert{}0} = \Phi_0(\vert{}\Phi^+\rangle\langle\Phi^+\vert{}) = \vert{}\Phi^+\rangle\langle\Phi^+\vert{}$. With this combination of state and measurements, one has $\langle CHSH \rangle_{AC|y=0} = 2 \sqrt{2}$ for $\langle CHSH \rangle_{AC|y=0} = \text{Tr}\left[\left( A_0 C_0 + A_0 C_1 + A_1 C_0 - A_1 C_1 \right){\rho_{AC|y=0}} \right]$. 
For $y = 1$, the map $\Phi_1^\dagger$ perfectly annihilates the correlation terms, leaving: $\tilde{\rho}_{AC\vert{}1} = \frac{1}{4}(\mathbb{I}\otimes\mathbb{I}) = \frac{\mathbb{I}_{AC}}{4}$. We see that $\tilde{\rho}_{AC\vert{}y}$ is positive semidefinite for all $y$, and the measurement operators are valid POVMs, so that $P(a,b,c\vert{}x,y,z) \ge 0$ for all $a,b,c,x,y,z$. In this case, since the state is fully mixed, we see that one has $\langle CHSH \rangle_{AC|y=1} = 0$.

\end{example}

This set is similar to the $\mathcal{RC}$ set considered in Appendix \ref{sec:RC} except that we now impose that the state is a positive semidefinite Hermitian operator $\rho_{ABC}$ of unit trace. The set differs from the standard set of quantum correlations $\mathcal{Q}$ in allowing for jamming maps $\Phi_y$ that purportedly still respect relativistic causality. 

To maintain non-negativity of probabilities, it is necessary to impose that $\Phi_y^{\dagger}(\rho_{AC}) \geq 0$. In Thm. \ref{thm:jammingmap-pos}, we have seen that if we instead require that $\Phi_y^{\dagger}$ is positive on every state (and not only the given initial state $\rho_{AC}$), then the condition of being locally identity preserving enforces that $\Phi_y = \mathbb{I}_{AC}$, i.e., no nontrivial positive jamming maps exist. We now consider the more restrictive condition that $\Phi_y^{\dagger}$ acts as a state-dependent positive map, i.e., only maintains positivity of the initial shared state $\rho_{AC}$. We prove that even this state-dependent version of jamming is unphysical, in that it reintroduces hidden superluminal signaling but this time via a different mechanism akin to that noted by Gisin \cite{Gisin90, SBG01} for nonlinear modifications of quantum theory.

We first recall the following Lemma on no-superluminal-signalling implying linearity of evolution, adapted to our particular evolution under jamming maps $\Phi_y^{\rho_{AC}}$. The proof follows arguments by Gisin in \cite{Gisin90} and Simon, Buzek and Gisin in \cite{SBG01} (see also the rederivation in \cite{BH15}). As usual, here an ensemble for a density operator \(\rho\in\mathcal{D}(\mathcal{H})\) is a finite collection \(\{(\rho^i,p_i)\}_{i\in I}\) where each \(\rho^i\in\mathcal{D}(\mathcal{H})\), \(p_i\ge0\), \(\sum_ip_i=1\), and
$\sum_ip_i\rho^i=\rho$. Furthermore, let \(|\Psi\rangle\in\mathcal{H}\otimes\mathcal{H}_D\) be a purification of \(\rho\) (i.e., \(\operatorname{tr}_D|\Psi\rangle\langle\Psi|=\rho\)). A projective measurement \(\{P_i\}_{i\in I}\) on \(\mathcal{H}_D\) steers the ensemble \(\{(\rho^i,p_i)\}\) if
\begin{equation}
   p_i=\langle\Psi|I\otimes P_i|\Psi\rangle
   \quad\text{and}\quad
   \rho^i=\frac1{p_i}\operatorname{tr}_R\bigl[(I\otimes P_i)|\Psi\rangle\langle\Psi|\bigr]
\end{equation}
for \(p_i>0\). 

\begin{assumption}
\label{ass:NS-assumption}
Let $\mathcal{H}_A$ and $\mathcal{H}_C$ be finite-dimensional Hilbert spaces and write $\mathcal{H}_{AC} = \mathcal{H}_A \otimes \mathcal{H}_C$. Let $\{\Phi_y^{\rho_{AC}}\}_{\rho_{AC} \in \mathcal{D}(\mathcal{H}_{AC}),y}$ be a family of linear, trace preserving maps. For every density operator $\rho_{AC} \in \mathcal{D}(\mathcal{H}_{AC})$ describing Alice-Charlie's system, every finite-dimensional reference system $\mathcal{H}_D$ that may be held by a space-like separated party David, and every purification $|\Psi_{ACD} \rangle \in \mathcal{H}_{AC} \otimes \mathcal{H}_D$ of $\rho_{AC}$, the following holds. 

Let $\{P_i\}_{i \in I}$ and $\{Q_j\}_{j \in J}$ be any two projective measurements on $\mathcal{H}_D$ that steer ensembles $\{(\rho^i_{AC}, p_i)\}$ and $\{(\sigma^j_{AC}, q_j)\}$ of $\rho_{AC}$ respectively. After the evolution, the states of the Alice-Charlie system must be identical:
\begin{eqnarray}
\label{eq:NS-ensembles}
    \sum_{i \in I} p_i (\Phi_y^{\rho^{i}_{AC}})^{\dagger}(\rho^{i}_{AC}) =  \sum_{j \in J} q_j (\Phi_y^{\sigma^{j}_{AC}})^{\dagger}(\sigma^{j}_{AC}).
\end{eqnarray}
Specifically, if the two mixtures in \eqref{eq:NS-ensembles} were different, a measurement performed on Alice-Charlie's system would reveal which of the two measurements $\{P_i\}_{i \in I}$ and $\{Q_j\}_{j \in J}$ were chosen by David on the spacelike separated system, thereby violating the principle of no-superluminal-signaling.   
\end{assumption}

\begin{lemma}
\label{lem:Gisin}
   Let $\mathcal{H}_A$ and $\mathcal{H}_C$ be finite-dimensional Hilbert spaces and write $\mathcal{H}_{AC} = \mathcal{H}_A \otimes \mathcal{H}_C$. Let $\{\Phi_y^{\rho_{AC}}\}_{\rho_{AC} \in \mathcal{D}(\mathcal{H}_{AC}),y}$ be a family of linear, trace preserving maps. Under Assumption \ref{ass:NS-assumption}, for every pair of density operators $\rho^0_{AC}, \rho^1_{AC} \in \mathcal{D}(\mathcal{H}_{AC})$, every $p \in [0,1]$ and every input $y$, it holds that
\begin{equation}
\label{eq:NS-equals-affinity}
    \left(\Phi_y^{p \rho^0_{AC} + (1-p) \rho^1_{AC}}\right)^{\dagger}\left(p \rho^0_{AC} + (1-p) \rho^1_{AC}\right) = p \; \left(\Phi_y^{\rho^0_{AC}}\right)^{\dagger}(\rho^0_{AC}) + (1-p) \; \left(\Phi_y^{\rho^1_{AC}}\right)^{\dagger}(\rho^1_{AC}).
\end{equation}
\end{lemma}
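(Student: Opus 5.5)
The plan is to use the standard Gisin/Simon--Bu\v{z}ek--Gisin steering construction. Given $\rho^0_{AC},\rho^1_{AC}$ and $p\in[0,1]$, set $\rho_{AC}=p\rho^0_{AC}+(1-p)\rho^1_{AC}$. The goal is to realize, on one purification of $\rho_{AC}$, two projective measurements on David's system. The first should steer the ensemble $\{(\rho^0_{AC},p),(\rho^1_{AC},1-p)\}$. The second should steer the trivial ensemble $\{(\rho_{AC},1)\}$. Then Assumption~\ref{ass:NS-assumption}, applied to this pair of measurements, gives \eqref{eq:NS-equals-affinity} directly. The endpoint cases $p\in\{0,1\}$ are trivial, since both sides coincide, so I assume $p\in(0,1)$.

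For the construction, let $|\psi_0\rangle\in\mathcal{H}_{AC}\otimes\mathcal{H}_{D'}$ and $|\psi_1\rangle\in\mathcal{H}_{AC}\otimes\mathcal{H}_{D'}$ be purifications of $\rho^0_{AC}$ and $\rho^1_{AC}$, with $\mathcal{H}_{D'}\cong\mathcal{H}_{AC}$, which is finite-dimensional. Take $\mathcal{H}_D=\mathcal{H}_{D'}\otimes\mathbb{C}^2$ and define
\[
|\Psi\rangle=\sqrt{p}\,|\psi_0\rangle|0\rangle+\sqrt{1-p}\,|\psi_1\rangle|1\rangle .
\]
Tracing out $D$ kills the cross terms because the flag states are orthogonal, so $\operatorname{tr}_D|\Psi\rangle\langle\Psi|=\rho_{AC}$ and $|\Psi\rangle$ is a valid purification. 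The measurement $\{P_i=\mathbb{I}_{D'}\otimes|i\rangle\langle i|\}_{i=0,1}$ is projective. It yields outcome $i$ with probability $p$ or $1-p$, and the steered conditional states are exactly $\rho^0_{AC}$ and $\rho^1_{AC}$. This is a short computation.

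For the second measurement, take the trivial projective measurement $\{Q_0=\mathbb{I}_D\}$. It steers the one-element ensemble $\{(\rho_{AC},1)\}$, since the post-measurement reduced state is just $\operatorname{tr}_D|\Psi\rangle\langle\Psi|=\rho_{AC}$. Substituting both ensembles into \eqref{eq:NS-ensembles} gives
\[
p\,(\Phi_y^{\rho^0_{AC}})^\dagger(\rho^0_{AC})+(1-p)\,(\Phi_y^{\rho^1_{AC}})^\dagger(\rho^1_{AC})=(\Phi_y^{\rho_{AC}})^\dagger(\rho_{AC}),
\]
which is the claim for every $y$.

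The only delicate point is whether Assumption~\ref{ass:NS-assumption} admits the trivial one-outcome measurement as a "projective measurement that steers an ensemble". I expect it does, since $\{\mathbb{I}_D\}$ is a legitimate PVM. If a reader objects, the fallback is to use a genuine two-outcome measurement instead. Choose an orthonormal basis of $\mathbb{C}^2$, e.g.\ $|\pm\rangle$, and measure $\{\mathbb{I}_{D'}\otimes|\pm\rangle\langle\pm|\}$. This does not generally steer $\rho_{AC}$ twice, so instead I would exploit the freedom of the Schrödinger--HJW theorem. For any ensemble of $\rho_{AC}$ there is a measurement on a sufficiently large purifying system steering it. In particular, I would pick any second decomposition, for example a spectral decomposition $\{(\tau_k,\lambda_k)\}$. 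Comparing both the ensemble $\{\rho^0,\rho^1\}$ and the trivial-ensemble step through this common reference gives the identity by transitivity. Either way the core content is the purification construction above, and the rest is bookkeeping.
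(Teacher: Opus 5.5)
Your main argument is correct and is essentially the paper's proof. The paper invokes the GHJW theorem to obtain a purification of $\rho_{AC}$ carrying a two-outcome projective measurement that steers $\{(\rho^0_{AC},p),(\rho^1_{AC},1-p)\}$, then applies Assumption~\ref{ass:NS-assumption} to compare it with the trivial measurement $\{\mathbb{I}_D\}$, exactly as you do with your explicit flagged purification.

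The paper uses the one-outcome measurement $\{\mathbb{I}_D\}$ without comment, so your fallback is unnecessary. As written it would also fail: comparing ensembles none of whose members is $\rho_{AC}$ can never produce the term $(\Phi_y^{\rho_{AC}})^\dagger(\rho_{AC})$, so no transitivity chain through a spectral decomposition gives the claim. If you did want to avoid $\{\mathbb{I}_D\}$, enlarge $\mathcal{H}_{D'}$ to $\mathcal{H}_{AC}\oplus\mathcal{H}_{AC}$ and take $|\psi_0\rangle,|\psi_1\rangle$ with orthogonal supports there. Then measuring the flag in the $|\pm\rangle$ basis steers $\rho_{AC}$ on both outcomes with probability $1/2$ each.
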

\begin{proof}
We are given arbitrary $\rho^0_{AC}, \rho^1_{AC} \in \mathcal{D}(\mathcal{H}_{AC})$ and $p \in [0,1]$. Let us define 
\begin{equation}
\label{eq:rhoAC-def}
\rho_{AC} := p \rho^0_{AC} + (1-p) \rho^1_{AC}. 
\end{equation}
By the GHJW theorem \cite{HJW93, Gisin89}, for any such $\rho^0_{AC}, \rho^1_{AC}$ and $p$, there exists a purification $|\Psi_{ACD} \rangle \in \mathcal{H}_{AC} \otimes \mathcal{H}_D$ of $\rho_{AC}$ and a two-outcome projective measurement $\{P_0, P_1\}$ on $\mathcal{H}_D$ that steers the ensemble $\{(\rho^0_{AC}, p), (\rho^1_{AC}, (1-p)\}$, while the single ensemble $\{\rho_{AC}, I_{AC}\}$ is steered by the trivial measurement $I_{D}$. By the no-superluminal-signaling condition in \eqref{eq:NS-ensembles} of Assumption \ref{ass:NS-assumption} applied to these two measurements $\{P_0, P_1\}$ and $\{I_D\}$, we have that the states after the evolution must be identical, i.e., it holds that
\begin{equation}
    \left(\Phi_y^{\rho_{AC}}\right)^{\dagger}(\rho_{AC}) = p \; \left(\Phi_y^{\rho^0_{AC}}\right)^{\dagger}(\rho^0_{AC}) + (1-p) \; \left(\Phi_y^{\rho^1_{AC}}\right)^{\dagger}(\rho^1_{AC}),
\end{equation}
which is equivalent to \eqref{eq:NS-equals-affinity} by \eqref{eq:rhoAC-def}.
\end{proof}

The following theorem shows that under the no-superluminal-signaling Assumption \ref{ass:NS-assumption} and the locally identity-preserving constraints on the maps $\Phi_y^{\rho_{AC}}$, the only maps that survive are those for which the effective state transformation is trivial: $\left(\Phi_y^{\rho_{AC}}\right)^{\dagger}(\rho_{AC}) = \rho_{AC}$ for all $\rho_{AC}, y$. In other words, no nontrivial jamming is possible even in this state-dependent setting. 

\begin{theorem}
    \label{thm:jamq-signal-app}
    Let $\mathcal{H}_A$ and $\mathcal{H}_C$ be finite-dimensional Hilbert spaces and write $\mathcal{H}_{AC} = \mathcal{H}_A \otimes \mathcal{H}_C$. Let $\{\Phi_y^{\rho_{AC}}\}_{\rho_{AC} \in \mathcal{D}(\mathcal{H}_{AC}),y}$ be a family of maps $\Phi_y^{\rho_{AC}} : \mathcal{L}(\mathcal{H}_{AC}) \to \mathcal{L}(\mathcal{H}_{AC})$ that satisfies, for every density operator $\rho_{AC}$ and every input $y$,
    \begin{itemize}
        \item $\Phi_y^{\rho_{AC}}$ is linear,
        \item $\Phi_y^{\rho_{AC}}$ is trace-preserving $\text{tr}\left(\Phi_y^{\rho_{AC}}(X)\right) = \text{tr}(X)$ for every $X \in \mathcal{L}(\mathcal{H}_{AC})$,
        \item $\Phi_y^{\rho_{AC}}$ is locally identity-preserving, i.e., $\Phi_y^{\rho_{AC}}(X_A \otimes \mathbb{I}_C) = X_A \otimes \mathbb{I}_C$ and $\Phi_y^{\rho_{AC}}(\mathbb{I}_A \otimes X_C) = \mathbb{I}_A \otimes X_C$ for all observables $X_A \in \mathcal{L}(\mathcal{H}_A)$ and $X_C \in \mathcal{L}(\mathcal{H}_C)$, 
        \item $\left(\Phi_y^{\rho_{AC}}\right)^{\dagger}(\rho_{AC}) \geq 0$.
    \end{itemize}
Then under the no-superluminal-signaling Assumption \ref{ass:NS-assumption}, it holds that
\begin{equation}
    \left(\Phi_y^{\rho_{AC}}\right)^{\dagger}(\rho_{AC}) = \rho_{AC},
\end{equation}
for every $\rho_{AC} \in \mathcal{D}(\mathcal{H}_{AC})$ and every input $y$.    
\end{theorem}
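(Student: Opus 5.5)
The plan is to collapse the state-dependent family into a single effective map and then reuse the argument of Thm.~\ref{thm:jammingmap-pos}. Fix $y$ and define $F_y:\mathcal{D}(\mathcal{H}_{AC})\to \text{Herm}(\mathcal{H}_{AC})$ by $F_y(\rho_{AC}) := \left(\Phi_y^{\rho_{AC}}\right)^{\dagger}(\rho_{AC})$. Three properties of $F_y$ will be used.
\begin{enumerate}
\item By Lemma~\ref{lem:Gisin}, $F_y$ is affine on the convex set $\mathcal{D}(\mathcal{H}_{AC})$. By induction on the number of terms it respects arbitrary finite convex combinations.
\item By the fourth hypothesis of the theorem, $F_y(\rho_{AC})\geq 0$ for every density operator.
\item $F_y$ preserves both marginals. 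For any $X_A$, the duality computation from the proof of Thm.~\ref{thm:jammingmap-pos}, applied to the particular map $\Phi_y^{\rho_{AC}}$, gives $\text{tr}[F_y(\rho_{AC})(X_A\otimes \mathbb{I}_C)]=\text{tr}[\rho_{AC}\,\Phi_y^{\rho_{AC}}(X_A\otimes\mathbb{I}_C)]=\text{tr}[\rho_{AC}(X_A\otimes\mathbb{I}_C)]$. Hence $\text{tr}_C F_y(\rho_{AC})=\text{tr}_C\rho_{AC}$, and similarly $\text{tr}_A F_y(\rho_{AC})=\text{tr}_A\rho_{AC}$. In particular $F_y$ is trace-preserving.
\end{enumerate}
Only the self-consistent combination $(\Phi^{\rho})^\dagger(\rho)$ enters, so it does not matter that the maps $\Phi_y^{\rho}$ themselves may differ wildly between states.

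The second step is to extend $F_y$ to a linear map $L_y$ on $\text{Herm}(\mathcal{H}_{AC})$. Any Hermitian $X$ can be written as $X=\alpha\rho-\beta\sigma$ with $\alpha,\beta\geq 0$ and $\rho,\sigma$ density operators, for instance via the positive and negative parts of $X$. Set $L_y(X):=\alpha F_y(\rho)-\beta F_y(\sigma)$. Well-definedness is the main technical point. Suppose $\alpha\rho-\beta\sigma=\alpha'\rho'-\beta'\sigma'$. Taking traces gives $\alpha+\beta'=\alpha'+\beta$, and the identity $\alpha\rho+\beta'\sigma'=\alpha'\rho'+\beta\sigma$ holds with equal total weights. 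After normalising, both sides are convex decompositions of the same density operator, so affinity of $F_y$ yields $\alpha F_y(\rho)+\beta'F_y(\sigma')=\alpha'F_y(\rho')+\beta F_y(\sigma)$, which is exactly what is needed. The same trick shows that $L_y$ is additive and positively homogeneous, hence real-linear. This is the standard extension of an affine map on the state space to a linear map, and I expect it to be the only step requiring care. In particular, a possible $p\in\{0,1\}$ edge case in the lemma needs no separate treatment, since those cases are trivial.

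Finally, apply the core argument of Thm.~\ref{thm:jammingmap-pos} to $L_y$. Take a pure product state $\Pi=|\psi\rangle\langle\psi|\otimes|\phi\rangle\langle\phi|$. Its image $L_y(\Pi)=F_y(\Pi)$ is positive semidefinite and has pure marginals $|\psi\rangle\langle\psi|$ and $|\phi\rangle\langle\phi|$. A positive operator with a pure marginal must factorise, so $F_y(\Pi)=\Pi$. In finite dimensions, pure product projectors span $\text{Herm}(\mathcal{H}_A)\otimes\text{Herm}(\mathcal{H}_C)=\text{Herm}(\mathcal{H}_{AC})$, so linearity forces $L_y=\text{id}$. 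Restricting to density operators gives $\left(\Phi_y^{\rho_{AC}}\right)^\dagger(\rho_{AC})=F_y(\rho_{AC})=\rho_{AC}$ for every $\rho_{AC}$ and every $y$, which is the claim. Notice that the positivity used here is only that of $F_y$ on states. That is precisely what the state-dependent hypothesis supplies once affinity has glued the family together.
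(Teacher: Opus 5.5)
Your proposal is correct and follows essentially the same route as the paper. You define $F_y(\rho_{AC})=(\Phi_y^{\rho_{AC}})^\dagger(\rho_{AC})$, get affinity from Lemma~\ref{lem:Gisin}, get marginal preservation by duality, show that pure product states are fixed via the pure-marginal factorisation, and conclude by linear extension and spanning. The only difference is that you build the real-linear extension explicitly through $X=\alpha\rho-\beta\sigma$ and check well-definedness, whereas the paper simply cites the nonempty relative interior of $\mathcal{D}(\mathcal{H}_{AC})$ for that step.
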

\begin{proof}
Fix an input $y$, since $y$ is arbitrary it is sufficient to prove the claim for this fixed value. Define $F_y : \mathcal{D}(\mathcal{H}_{AC}) \to \mathcal{L}(\mathcal{H}_{AC})$ by 
\begin{equation}
    F_y(\rho_{AC}) := \left(\Phi_y^{\rho_{AC}}\right)^{\dagger}(\rho_{AC}).
\end{equation}
By assumption, $F_y(\rho_{AC}) \geq 0$. From Lemma \ref{lem:Gisin}, for any ensemble $\rho_{AC} = \sum_i p_i \rho^i_{AC}$, the no-superluminal-signaling Assumption \ref{ass:NS-assumption} implies that
\begin{eqnarray}
    F_y(\rho_{AC}) = \sum_i p_i F_y(\rho^i_{AC}),
\end{eqnarray}
i.e., $F_y$ is affine on the convex set $\mathcal{D}(\mathcal{H}_{AC})$. 

We now show that $F_y$ preserves both reduced states on $\mathcal{H}_A$ and $\mathcal{H}_C$. To that end, observe that for every $X_A \in \mathcal{L}(\mathcal{H}_A)$, we have
\begin{equation}
\begin{aligned}
    \text{tr}\left[F_y(\rho_{AC})(X_A \otimes \mathbb{I}_C) \right] 
    &=\text{tr}\left[\left(\Phi_y^{\rho_{AC}}\right)^{\dagger}(\rho_{AC})(X_A \otimes \mathbb{I}_C) \right]
    &=\text{tr}\left[\rho_{AC} \Phi_y^{\rho_{AC}}(X_A \otimes \mathbb{I}_C) \right]
    &= \text{tr}\left[\rho_{AC}(X_A \otimes \mathbb{I}_C) \right].
\end{aligned}
\end{equation}
In the above, the first equality is the definition of $F_y$, the second equality is by the definition of the adjoint, and the third equality is by the local identity preservation of $\Phi_y^{\rho_{AC}}$. We therefore obtain that  
\begin{eqnarray}
\label{eq:local-identity-A}
 &\text{tr}\left[F_y(\rho_{AC})(X_A \otimes \mathbb{I}_C) \right] = \text{tr}\left[\rho_{AC}(X_A \otimes \mathbb{I}_C) \right] \nonumber \\
& \implies   \text{tr}\left[\text{tr}_C F_y(\rho_{AC}) X_A - \text{tr}_C \rho_{AC}X_A \right] = 0 \; \quad \forall X_A \in \mathcal{L}(\mathcal{H}_A) \nonumber \\
 &\implies \text{tr}_C F_y(\rho_{AC}) = \text{tr}_C \rho_{AC}.
\end{eqnarray}
By an analogous argument using $\Phi_y^{\rho_{AC}}(\mathbb{I}_A \otimes X_C) = \mathbb{I}_A \otimes X_C$ we obtain that 
\begin{equation}
\label{eq:local-identity-C}
    \text{tr}_A F_y(\rho_{AC}) = \text{tr}_A \rho_{AC}.
\end{equation}

We now proceed to show that $F_y$ fixes every pure product state. To that end, let $\sigma_{AC} = |\psi \rangle \langle \psi|_A \otimes |\phi \rangle \langle \phi|_C$ be an arbitrary pure product state, and set $\tau_{AC} := F_y(\sigma_{AC})$. By assumption, we have that $\tau \geq 0$. From the properties \eqref{eq:local-identity-A} and \eqref{eq:local-identity-C} derived above we have that
\begin{eqnarray}
    \text{tr}_{C} \tau_{AC} &=& |\psi \rangle \langle \psi|_A, \nonumber \\
    \text{tr}_{A} \tau_{AC} &=& |\phi \rangle \langle \phi|_C.
\end{eqnarray}
Since $\tau_{AC} \geq 0$ and $\text{tr}_{C} \tau_{AC} = |\psi \rangle \langle \psi|_A$ is pure, we obtain that $\tau_{AC} = |\psi \rangle \langle \psi|_A \otimes \sigma_C$ for some state $\sigma_C$. By the second marginal condition, we then obtain that $\sigma_C = \text{tr}_A \tau_{AC} = |\phi \rangle \langle \phi|_C$ giving $\tau_{AC} = |\psi \rangle \langle \psi|_A \otimes |\phi \rangle \langle \phi|_C$. We therefore obtain that
\begin{equation}
\label{eq:pure-state-fixing}
    F_y(\sigma_{AC}) = | \psi \rangle \langle \psi|_A \otimes |\phi \rangle \langle \phi|_C = \sigma_{AC}.
\end{equation}
In other words, $F_y$ fixes every pure product state. It remains to show that this implies that $F_y$ is the identity on all density operators.


Now since $\mathcal{H}_A$ and $\mathcal{H}_C$ are finite-dimensional, the real vector spaces $\text{Herm}(\mathcal{H}_A)$ and $\text{Herm}(\mathcal{H}_C)$ are also finite-dimensional and the pure-state projectors span these spaces (every Hermitian operator on $\mathcal{H}_A$ is a real linear combination of projectors $|\psi \rangle \langle \psi|_A$, and similarly for $\mathcal{H}_C$). It follows that the operators $|\psi \rangle \langle \psi|_A \otimes |\phi \rangle \langle \phi|_C$ span the real vector space $\text{Herm}(\mathcal{H}_A \otimes \mathcal{H}_C)$. Now, since $\mathcal{D}(\mathcal{H}_{AC})$ has nonempty relative interior in the real affine space of trace-one Hermitian operators, $F_y$ admits a unique real-linear extension $\tilde{F}_y$ to $\text{Herm}(\mathcal{H}_{AC})$. Furthermore, since $\tilde{F}_y$ fixes every pure product projector, by linearity we have that $\tilde{F}_y = \mathbb{I}_{AC}$ on $\text{Herm}(\mathcal{H}_{AC})$. Therefore, we obtain that 
\begin{equation}
    F_y(\rho_{AC}) = \left( \Phi_y^{\rho_{AC}}\right)^{\dagger}(\rho_{AC}) = \rho_{AC},
\end{equation}
for every density operator $\rho_{AC}$. 
\end{proof}
Therefore, we conclude that no nontrivial jamming extension of quantum correlations is possible - any attempt to introduce nontrivial jamming maps in either state-independent or state-dependent fashion leads to hidden superluminal signaling.

\section{Jamming Attacks on Device-Independent Cryptographic Protocols}
In this section, we investigate the security of Device-Independent (DI) protocols against adversaries constrained only by the principle of Relativistic Causality. As we have seen, the set $\mathcal{RC}$ is the correct set to consider when considering relativistic attacks on DI protocols rather than the hitherto considered no-signaling polytope $\mathcal{NS}$ \cite{MAG06, PR94, PBS11}.

\subsection{Jamming attacks on Device-Indpendent Bit Commitment}
\label{sec:Jam-BitCommitment}
In this section, we outline how jamming correlations affect the security of device-independent schemes for bit commitment. Specifically, we show how the two-prover bit commitment scheme of Fehr-Fillinger \cite{FF15} which was proven secure solely on the impossibility of superluminal signaling (when the devices held by the dishonest provers correspond to an arbitrary no-signaling behavior), fails to be secure in configurations which allow jamming. 

Recall that a bit commitment scheme lets a prover (committer) lock in a bit $b \in \{0,1\}$ such that two properties are obeyed:
\begin{itemize}
    \item Hiding: after the commit phase, a verifier has (almost) no information about $b$,
    \item Binding: the prover cannot later open the commitment to both $0$ and $1$.
\end{itemize}
It is well-known that in the classical and quantum single-prover scenarios it is impossible to achieve both properties unconditionally \cite{Mayers97, LoChau97}. Ben-Or, Goldwasser, Kilian and Wigderson \cite{BGKW88} therefore introduced the multi-prover model, whereby the prover is split into two or more agents that are assumed unable to communicate with each other once the protocol begins. The hope then is that the no-communication assumption alone is sufficient to enforce binding. 

Considering two-prover schemes, it is also known that every single-round two-prover commitment scheme that is almost perfectly hiding can be broken by a no-signaling attack. Further, the attack is essentially optimal: the dishonest provers can open the commitment to either bit with success probability equal to the honest opening probability (probability $1$ when the scheme is perfectly sound. 

On the other hand, as shown by Fehr-Fillinger, there exists a three-prover commitment scheme that is secure against every non-signaling attack \cite{FF15}. Here, we show that for the same scheme is rendered fully insecure by jamming attacks in the specific spacetime configurations where the third prover $P_3$ can jam the first two provers $P_1$ and $P_2$ (i.e., where the intersection of the future light cones of the output events of $P_1$ and $P_2$ lies within the future light cone of the input event of $P_3$).  

We briefly sketch the (simple) argument before presenting a formal proof in Thm. \ref{thm:RCattack-BitCommitment}. For a scheme to be considered binding against every no-signaling (NS) strategy, the combined probability of successfully opening to bit $0$ and bit $1$ must be strictly limited as $p_0^{NS} + p_1^{NS} \leq 1+ \eta$ for some $\eta < 1$ \cite{FF15}. Now, RC constraints in the jamming configuration permit the joint distribution of messages from $P_1, P_2$ to depend on an auxiliary classical bit $c$ given to $P_3$. This allows a strategy whereby both $P_1$ and $P_2$ sample and store opening strings for both possible bits using shared randomness, while $P_3$ uses $c$ to sample and output the corresponding marginals for the chosen string. By executing this RC strategy, the provers can successfully open the commitment to either bit with a combined probability of $p_0^{RC} + p_1^{RC} \geq 2 - 2 \epsilon - 2 \delta$ where $\epsilon$ is the hiding parameter and $\delta$ is the soundness error. In an ideal commitment scheme with $\epsilon, \delta \to 0$, this success rate exceeds the binding bound $1 + \eta$ that holds against NS adversaries, showing that RC attacks successfully break the commitment scheme. 

To establish this formally, let us define the three-prover bit commitment scheme $\Pi$ of \cite{FF15}. The verifier samples a challenge tuple $a = (a_1, a_2, a_3)$ according to a prior distribution $\mu(a)$ and sends $a_i$ to prover $P_i$. Each prover $P_i$ replies with a commit-phase message $x_i \in X_i$. To open a committed bit $b \in \{0, 1\}$, the provers subsequently return opening messages $y_i \in Y_i$. The verifier evaluates a predicate $V(a, x, y, b) \in \{0, 1\}$ and accepts if and only if $V = 1$. Let $p_b^{\mathrm{hon}}(x, y \vert{} a)$ denote the joint probability distribution of all messages when the provers execute the honest strategy to commit to bit $b$ and later open to $b$. The scheme satisfies three properties:

\begin{itemize}
\item $\varepsilon-Hiding$: The commit-phase messages are statistically indistinguishable to the verifier: $d(p_0^{\mathrm{hon}}(x), p_1^{\mathrm{hon}}(x)) \leq \varepsilon$, where $d(P, Q) = \frac{1}{2}\sum_x \vert{}P(x) - Q(x)\vert{}$ is the total variation distance.

\item $\delta-Sound$: The honest strategies succeed with high probability: $\mathbb{E}_{a \sim \mu} \bigl[ p_b^{\mathrm{hon}}(V(a, x, y, b) = 1) \bigr] \ge 1 - \delta \quad \forall b \in \{0, 1\}$.

\item Binding against Non-Signaling: Against any non-signaling (NS) adversary, the sum of success probabilities for opening to $0$ and $1$ is bounded as $p_0^{\mathrm{NS}} + p_1^{\mathrm{NS}} \le 1 + \eta$ for some $\eta < 1$. 
\end{itemize}

\begin{theorem}
\label{thm:RCattack-BitCommitment}
Suppose that the three provers $P_1, P_2$ and $P_3$ executing protocol $\Pi$ above are embedded in a spacetime configuration $\mathcal{M}$ where $P_3$ is permitted to jam the pair $\{P_1, P_2\}$. There exists a behavior in the Relativistically Causal (RC) set of $\mathcal{M}$ such that the provers can successfully open to both $b=0$ and $b=1$ with a combined probability:
\begin{equation}
p_0^{\mathrm{RC}} + p_1^{\mathrm{RC}} \ge 2 - 2\varepsilon - 2\delta,
\end{equation}
where $\epsilon$ is the hiding parameter and $\delta$ is the soundness error of the protocol $\Pi$. In the ideal scheme with $\epsilon, \delta \to 0$, this breaks the NS binding bound without violating relativistic causality.
\end{theorem}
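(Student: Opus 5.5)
The plan is to exhibit an explicit cheating behavior and check two things: that it lies in the $\mathcal{RC}$ set of $\mathcal{M}$, and that it achieves the claimed success bound. We model each prover's full transcript as one input/output pair. Prover $P_i$ receives the commit challenge $a_i$ and, later, an opening request; it returns $x_i$ and then $y_i$. Prover $P_3$ additionally receives a classical bit $c\in\{0,1\}$, which encodes the bit the dishonest committer finally chooses to open. We treat $c$ as part of $P_3$'s input, so its input is $(a_3,c)$. The intended behavior is
\[
P(x,y\vert{}a,c) := p_c^{\mathrm{hon}}(x,y\vert{}a),
\]
so that, for each value of $c$, the provers reproduce exactly the honest strategy that commits to and opens $c$. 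One point must be handled carefully: the commit messages are produced before $c$ is known, so formally we need a single commit-phase distribution. We fix this through the hiding property in the step below.

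\textbf{Step 1 (coupling the commit phase).} By $\varepsilon$-hiding, $d(p_0^{\mathrm{hon}}(x),p_1^{\mathrm{hon}}(x))\le\varepsilon$. Take a maximal coupling, or simply use $p_0^{\mathrm{hon}}(x)$ as the common commit distribution. Then define the opening distribution for bit $c$ as the honest conditional $p_c^{\mathrm{hon}}(y\vert{}x,a)$. This modified strategy $\tilde p_c$ differs from $p_c^{\mathrm{hon}}$ by at most $\varepsilon$ in total variation, since its $x$-marginal is replaced by one that is $\varepsilon$-close while the conditional is unchanged. Its acceptance probability is therefore at least $1-\delta-\varepsilon$, using $\delta$-soundness together with the total-variation bound on events. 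Summing over $c=0,1$ gives $p_0^{\mathrm{RC}}+p_1^{\mathrm{RC}}\ge 2-2\varepsilon-2\delta$.

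\textbf{Step 2 (membership in $\mathcal{RC}$).} This is the main obstacle. The honest behavior $\tilde p_c$ is no-signaling for each fixed $c$, but as a family indexed by $c$ it generally lets $c$ change the marginals of $P_1$ and $P_2$. The RC constraints for the jamming configuration only allow $c$ to influence the \emph{joint} $P_1P_2$ statistics. The single-prover marginals of $P_1$ and $P_2$, and the $P_1P_3$ and $P_2P_3$ marginals, must be independent of $c$.

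To achieve this, we would follow the sketch in the text. $P_1$ and $P_2$ use shared randomness to sample opening strings for \emph{both} bits, and each outputs a string whose marginal is a $c$-independent mixture. $P_3$'s input $c$ then selects only which pairing of $P_1$'s and $P_2$'s strings is jointly realized. Concretely, let each of $P_1$ and $P_2$ output a uniformly random choice between its two candidate strings, and let $c$ (via jamming) correlate these choices to agree on branch $c$. This makes the single marginals of $P_1$ and $P_2$ independent of $c$. We also need the pairwise marginals involving $P_3$ to be $c$-independent. If the verifier's predicate requires $P_3$'s own output to be consistent with branch $c$, then $P_3$ knows $c$ and can output its branch-$c$ string, and the constraint becomes delicate. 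We expect this to require either placing the $c$-dependence only in the $P_1P_2$ joint term, or accepting a factor loss.

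Once a candidate table is fixed, we would verify the RC marginal equalities directly. The cleanest route to membership is then Thm.~\ref{thm:RC-rep-app}: take $\Omega$ to represent a no-signaling reference behavior $Q$, and let a jamming map $\Phi_c$ carry the $c$-dependent $P_1P_2$ correlation, with the local-identity-preserving conditions encoding exactly the marginal constraints. If the branch-mixing introduces loss, we would check whether the resulting acceptance probability still meets the stated bound. This is the step where the proof is most likely to need adjustment.
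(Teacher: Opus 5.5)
\textbf{Step 1 is sound; the gap is Step 2.} Step 1 is the same idea as the paper's first step, which glues the two honest strategies with the Fehr--Fillinger coupling lemma and merges the commit strings. Your version $p_0^{\mathrm{hon}}(x\vert a)\,p_c^{\mathrm{hon}}(y\vert x,a)$ is slightly simpler and even gives $2-2\delta-\varepsilon$. Step 2, however, never exhibits a behavior and checks it against the $\mathcal{RC}$ constraints, and the mechanism you sketch cannot work.

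First, you misstate the constraints. $c$ is part of $P_3$'s input, so the $P_3$, $P_1P_3$ and $P_2P_3$ marginals may depend on $c$. They only have to be independent of the \emph{other} provers' inputs, and this holds for each $p_c^{\mathrm{hon}}$ because it is no-signaling. The only constraints linking the two values of $c$ are that the laws of $P_1$'s full output $(x_1,y_1)$ and of $P_2$'s full output $(x_2,y_2)$ be $c$-independent.

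Second, your branch-mixing device fails. Write $\beta_i$ for the branch followed by $P_i$. A uniform, $c$-independent $\beta_1$ gives $\Pr[\beta_1=\beta_2=c\mid c]\le 1/2$, so the branches cannot ``agree on $c$''. More fundamentally, no mixing can help, because $\mathcal{RC}$ constrains the law of the actual outputs, not of a latent label. Suppose acceptance of bit $b$ requires $(x_1,y_1)\in S_b(a_1)$ with $S_0(a_1)\cap S_1(a_1)=\emptyset$ (e.g.\ $P_1$ must itself announce $b$). Then any behavior in which only $P_3$ learns $c$ has $p_0+p_1\le\mathbb{E}_a\bigl[\Pr(S_0(a_1))+\Pr(S_1(a_1))\bigr]\le 1$. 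The ``factor loss'' you contemplate is therefore fatal, not a constant to absorb.

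A smaller point: for $\varepsilon>0$ your $\tilde p_1$ need not be no-signaling even at fixed $c$. Reweighting $x_2,x_3$ by $p_0^{\mathrm{hon}}$ can make $P_1$'s marginal depend on $a_2,a_3$.

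\textbf{Comparison with the paper.} The paper handles Step 2 by letting $P_1,P_2$ output $y_1^{(c)},y_2^{(c)}$ and asserting that the single-party marginals are $c$-independent ``by construction''. That holds for the commit strings, which are sampled before $c$. But $P_i$'s full output then has law (up to $\varepsilon$) $p_c^{\mathrm{hon}}(x_i,y_i\vert a_i)$, and hiding says nothing about opening strings. Your worry therefore pinpoints exactly the step the paper leaves unjustified.

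\textbf{What would close the gap.} You need an additional property of $\Pi$:
\begin{itemize}
\item $b$ is not announced through $P_1$ or $P_2$;
\item the honest strategies for $b=0,1$ induce identical single-prover distributions for $P_1$ and for $P_2$, so that $b$ is carried only by the $P_1P_2$ correlations and by $P_3$.
\end{itemize}
Under this property and perfect hiding, the family $\{p_c^{\mathrm{hon}}\}_c$ lies in $\mathcal{RC}$ by direct inspection of the marginal constraints. The operator representation is not needed. For example, suppose $P_1$ commits $x=r+a_1b$ with $r$ uniform over a field, $P_2$ and $P_3$ must each reveal $r$, and $b$ is announced by $P_3$. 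Then $y_2=y_3=x-a_1c$ satisfies every constraint and opens either bit with certainty.

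Without such a property, or a check on the concrete Fehr--Fillinger scheme, neither your argument nor the paper's one-line justification establishes the bound for an arbitrary $\Pi$.
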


\begin{proof}
We first recall the following lemma from \cite{FF15}. 
\begin{lemma}[\cite{FF15}]
\label{lem:gluing-lemma}
Let $p_0(u, v)$ and $p_1(u, v)$ be two probability distributions defined on a finite product space $\mathcal{U} \times \mathcal{V}$ such that the total variation distance between their marginals on $\mathcal{U}$ is bounded as $d(p_0(u), p_1(u)) \leq \varepsilon$. 
Then there exists a joint distribution $q(u_0, u_1, v_0, v_1)$ such that:
\begin{eqnarray}
\sum_{u_1, v_1} q(u_0, u_1, v_0, v_1) &=& p_0(u_0, v_0) \nonumber \\
\sum_{u_0, v_0} q(u_0, u_1, v_0, v_1) &=& p_1(u_1, v_1) \nonumber \\
q(u_0 \neq u_1) &\leq& \varepsilon. 
\end{eqnarray}
\end{lemma}
\begin{proof} 
By definition of statistical distance, there is a coupling $r(u_0, u_1)$ of the marginals $p_0(u)$ and $p_1(u)$ (with $\sum_{u_1} r(u_0, u_1) = p_0(u_0)$ and $\sum_{u_0} r(u_0, u_1) = p_1(u_1)$) satisfying $P_r(u_0 \neq u_1) = \sum_{u_0, u_1: u_0 \neq u_1} r(u_0, u_1) = \varepsilon$. We set:
\begin{equation}
q(u_0, u_1, v_0, v_1) := r(u_0, u_1) \, p_0(v_0 \vert{} u_0) \, p_1(v_1 \vert{} u_1).
\end{equation}
The three properties hold by direct inspection of the marginals. \end{proof}

Now, for each fixed challenge $a$, apply the Lemma \ref{lem:gluing-lemma} to the two honest distributions $p_0^{\mathrm{hon}}(\cdot \vert{} a)$ and $p_1^{\mathrm{hon}}(\cdot \vert{} a)$, taking $u = x$ and $v = y$. This produces a family of distributions $q(x^{(0)}, x^{(1)}, y^{(0)}, y^{(1)} \vert{} a)$ such that, for every $a$ \cite{FF15}:
\begin{itemize}
\item The marginal on $(x^{(0)}, y^{(0)})$ equals $p_0^{\mathrm{hon}}(\cdot \vert{} a)$,

\item The marginal on $(x^{(1)}, y^{(1)})$ equals $p_1^{\mathrm{hon}}(\cdot \vert{} a)$,

\item $q(x^{(0)} \neq x^{(1)} \vert{} a)$ satisfies $\mathbb{E}_{a \sim \mu}[q(x^{(0)} \neq x^{(1)})] \leq \varepsilon$.
\end{itemize}

Define a new distribution $r(x, y^{(0)}, y^{(1)} \vert{} a)$ on the set of triples $(x, y^{(0)}, y^{(1)})$ by:
\begin{equation}
r(x, y^{(0)}, y^{(1)} \vert{} a) := q(x^{(0)}=x, x^{(1)}=x, y^{(0)}, y^{(1)} \vert{} a) + \frac{1}{2} q(x^{(0)} \neq x^{(1)}, y^{(0)}, y^{(1)} \vert{} a).
\end{equation}
We now compute the acceptance probabilities under $r$. On the event $x^{(0)} = x^{(1)} = x$, the pair $(x, y^{(0)})$ is distributed exactly as $p_0^{\mathrm{hon}}$ and the pair $(x, y^{(1)})$ is distributed exactly as $p_1^{\mathrm{hon}}$. On the complementary event (which has probability at most $\varepsilon$), the acceptance probability is non-negative. Therefore, we obtain
\begin{eqnarray}
\mathbb{E}_{a \sim \mu} \bigl[ r(V(a, x, y^{(0)}, 0) = 1) \bigr] &\geq& (1-\varepsilon)(1-\delta) + \varepsilon \cdot 0 \nonumber \\
&=& 1 - \delta - \varepsilon + \varepsilon\delta \geq 1 - \delta - \varepsilon.
\end{eqnarray}
Similarly, for $b=1$:
\begin{eqnarray}
\mathbb{E}_{a \sim \mu} \bigl[ r(V(a, x, y^{(1)}, 1) = 1) \bigr] \geq 1 - \delta - \varepsilon.
\end{eqnarray}
We now embed this into an RC behavior via the following construction:
\begin{itemize}
\item Using shared randomness that realizes the distribution $r(\cdot \vert{} a)$, the provers $P_1$ and $P_2$ sample the tuple $(x_1, x_2, y_1^{(0)}, y_1^{(1)}, y_2^{(0)}, y_2^{(1)})$ and output the commit-phase messages $x_1, x_2$.

\item Prover $P_3$ receives their challenge $a_3$ and the auxiliary bit $c$. $P_3$ outputs $x_3$ together with the single opening string $y_3^{(c)}$.

\item In the opening phase, $P_1$ and $P_2$ output the strings $y_1^{(c)}$ and $y_2^{(c)}$.
\end{itemize}
The construction ensures that the resulting joint distribution of all messages, conditioned on $c$, is exactly the corresponding marginal of $r$. Furthermore, the single-party marginals are independent of $c$ by construction. Therefore, the behavior lies within the set $\mathcal{RC}$ of the jamming configuration.

By construction, it holds that if the auxiliary bit $c = 0$, the verifier's predicate for bit $0$ evaluates to $1$ with probability at least $1 - \delta - \varepsilon$. Analogously, if the auxiliary bit  $c = 1$, the predicate for bit $1$ evaluates to $1$ with probability at least $1 - \delta - \varepsilon$. Therefore we obtain,
\begin{equation}
p_0^{\mathrm{RC}} + p_1^{\mathrm{RC}} \ge 2 - 2\varepsilon - 2\delta.
\end{equation}
We thus obtain a behavior that surpasses the NS binding bound without violating relativistic causality.
\end{proof}

\subsection{Jamming attacks on Device-Independent Secret Sharing}
Device-Independent (DI) Secret Sharing aims to distribute a secret among multiple receivers so that the secret can be reconstructed only when a designated subset of receivers collaborate, while remaining hidden from any proper subset and from external eavesdroppers. The protocol of \cite{MBNC20} involves three distant parties - Charlie (the sender) and Alice and Bob (the receivers), who all share a device that takes binary inputs $x, y, z$ and produces binary outputs $a, b, c$ respectively. For a fixed input triple $(x^*, y^*, z^*) = (0,0,0)$, the outputs are required to satisfy the relation $c = a \oplus b$ (where $a \oplus b$ denotes the XOR of the two bits $a,b$). So that the sender Charlie's secret bit $c$ is the parity of the two receivers' bits $a$ and $b$ which appear random and hold no information individually. The protocol is required to be secure against an external eavesdropper but also against an untrusted receiver Alice, who may produce an additional output $e$ in an attempt to guess the secret, i.e., the dishonest Alice aims to maximize her guessing probability of the secret $P_{\text{guess}}(e=c|x^*,y^*,z^*)$ without collaborating with Bob. Security against such attacks is certified by the parties testing for sufficiently strong genuine multipartite nonlocality in the form of the violation of Svetlichny inequality $S \leq 4$ \cite{Svetlichny87} by the observed behavior $\{P(a,b,c|x,y,z)\}$. Formally, the Svetlichny inequality is given as
\begin{eqnarray}
\label{eq:Svetlichny}
S &=& \sum_{a,b,c = 0}^1 (-1)^{a+b+c} \Big[ 
P(a,b,c|0,0,0) + P(a,b,c|0,0,1) + P(a,b,c|0,1,0) - P(a,b,c|0,1,1) \nonumber \\
&& \qquad \qquad + P(a,b,c|1,0,0) - P(a,b,c|1,0,1) - P(a,b,c|1,1,0) - P(a,b,c|1,1,1) 
\Big] \leq 4.
\end{eqnarray}
Under the full set of no-signaling constraints (i.e., that the observed behavior lies within $\mathcal{NS}$) the maximal violation of the Svetlichny inequality enforces that $P_{\text{guess}}(e=c|x^*,y^*,z^*) = 1/2$.

\begin{theorem}
    Let $\Pi$ denote the tripartite DI secret sharing protocol of \cite{MBNC20} explained above. Suppose that the protocol is executed in a jamming configuration where Bob's measurement event lies in the causal past of the intersection of the future light cones of Alice and Charlie, so that Bob is permitted to jam the Alice-Charlie marginal. Then there exists a behavior $\{P(a,b,c, e|x,y,z)\}$ in the relativistically causal set $\mathcal{RC}$ such that
    \begin{itemize}
        \item the observable behavior with marginals $P(a,b,c|x,y,z) = \sum_e P(a,b,c,e|x,y,z)$ attains the algebraic maximum of the Svetlichny expression \eqref{eq:Svetlichny} $S = 8$,
        \item a dishonest Alice is able to perfectly guess the secret with an auxiliary output $e$, i.e., Alice achieves $P_{\text{guess}}(e=c|x^*,y^*,z^*) = 1$ without collaborating with Bob. 
    \end{itemize}
\end{theorem}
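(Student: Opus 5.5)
The plan is to write down one explicit behavior and check it directly against the $\mathcal{RC}$ constraints of the jamming configuration. The dishonest Alice's guess $e$ is treated as part of Alice's output, so her output is the pair $(a,e)$. Let $f(x,y,z) := xy \oplus yz \oplus xz$ and define
\begin{equation*}
P(a,b,c,e\vert{}x,y,z) = \tfrac{1}{4}\,\delta_{e,\,a\oplus b}\;\delta_{c,\,a\oplus b\oplus f(x,y,z)},
\end{equation*}
so that $a$ and $b$ are independent uniform bits, $e = a\oplus b$, and $c = a\oplus b\oplus f$.

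First I check the Svetlichny value. Reading off the sign pattern in \eqref{eq:Svetlichny}, the term for input $(x,y,z)$ carries sign $(-1)^{f(x,y,z)}$; this is a routine check over the eight input triples. Under the behavior, $(-1)^{a+b+c} = (-1)^{f(x,y,z)}$ with certainty, so each of the eight terms contributes $+1$ and $S=8$. Since $f(0,0,0)=0$, at the protocol input $(x^*,y^*,z^*)=(0,0,0)$ we have $c = a\oplus b = e$ with certainty. Hence $P_{\text{guess}}=1$, and $e$ is produced inside Alice's box without any communication with Bob.

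The substantive step is verifying membership in $\mathcal{RC}$ for the configuration in which Bob jams Alice--Charlie. Here the only constraint that is dropped is independence of $P(a,e,c\vert{}x,y,z)$ from $y$. The remaining constraints are checked one by one.
\begin{itemize}
\item $P(a,e\vert{}x)$, $P(b\vert{}y)$ and $P(c\vert{}z)$ are all uniform, hence independent of every input.
\item The Alice--Bob marginal is $P(a,e,b\vert{}x,y,z) = \tfrac14\delta_{e,a\oplus b}$, which is independent of $z$.
\item The Bob--Charlie marginal is $P(b,c\vert{}x,y,z)=\tfrac14$. Summing over $a$ randomizes $c$ independently of $b$, so this is independent of $x$.
\end{itemize}
The Alice--Charlie marginal is $P(a,e,c\vert{}x,y,z)=\tfrac14\delta_{c,\,e\oplus f(x,y,z)}$. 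It genuinely depends on $y$, which is exactly the jamming freedom Bob is allowed. It also shows why the behavior lies outside $\mathcal{NS}$, consistent with the result of \cite{MBNC20}.

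Optionally, I would exhibit the operator representation via Thm.~\ref{thm:RC-rep}. The necessity direction of Thm.~\ref{thm:RC-rep-app} already guarantees such a representation, so this is not required for the proof.

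The main obstacle I expect is conceptual rather than computational. One must justify that bundling $e$ into Alice's output is the correct threat model, so that the $\mathcal{RC}$ constraints are applied to $(a,e)$ jointly. One must also check that the joint correlation between Alice's $(a,e)$ and Bob's $b$ does not violate any constraint. It does not, because the Alice--Bob pair is only required to be independent of Charlie's input $z$.
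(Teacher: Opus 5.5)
Your proposal is correct and is essentially the paper's own proof. It uses the same behavior $\tfrac14\,\delta_{e,a\oplus b}\,\delta_{a\oplus b\oplus c,\,xy\oplus xz\oplus yz}$, obtains $S=8$ from the same sign pattern $(-1)^{xy\oplus xz\oplus yz}$, and checks the $\mathcal{RC}$ constraints marginal by marginal with $e$ treated as part of Alice's output, so that only $P(a,c,e\vert x,y,z)=\tfrac14\,\delta_{e\oplus c,\,xy\oplus xz\oplus yz}$ depends on $y$ and $e=c$ holds with certainty at $(0,0,0)$.
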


\begin{proof}
The explicit behavior $\{P(a,b,c,e|x,y,z)\} \in \mathcal{RC}$ achieving the two conditions stated in the theorem is computed by optimizing the attack strategies over the convex polytope $\mathcal{RC}$ and is given by
\begin{equation}
    P(a,b,c,e|x,y,z) = \frac{1}{4} \; \delta_{a \oplus b \oplus c, xy \oplus xz \oplus yz} \cdot \delta_{e,a \oplus b},
\end{equation}
where $\delta_{i,j}$ denotes the Kronecker delta  ($\delta_{i,j} = 1 1$ if \(i = j\) and $0$ otherwise), and \(\oplus \) denotes addition modulo $2$.

Firstly, we sum over the auxiliary output $e$ to obtain the observed behavior as  
\begin{equation}
    P(a,b,c|x,y,z) = \sum_e P(a,b,c,e|x,y,z) = \frac{1}{4} \; \delta_{a \oplus b \oplus c, xy \oplus xz \oplus yz}. 
\end{equation}
We see that for $(x,y,z) \in \{(0,0,0), (0,0,1), (0,1,0), (1,0,0)\}$ we have $xy \oplus xz \oplus yz = 0$ giving $P(a,b,c|x,y,z) = \frac{1}{4}$ for $a \oplus b \oplus c = 0$ meaning $(-1)^{a+b+c} = 1$. Similarly for $(x,y,z) \in \{(0,1,1),(1,0,1), (1,1,0), (1,1,1)\}$ we have $xy \oplus xz \oplus yz = 1$ giving $P(a,b,c|x,y,z) = \frac{1}{4}$ for $a \oplus b \oplus c = 1$ meaning $(-1)^{a+b+c} = -1$. Substituting in \eqref{eq:Svetlichny} gives $S = 8$. 

Secondly, we verify that the behavior $\{P(a,b,c,e|x,y,z)\}$ satisfies all the relativistic causality constraints for the jamming configuration. Recall that \cite{HR19} here all marginal behaviors of any subset of parties is required to be independent of the inputs of the complementary set, except the joint behavior of Alice-Charlie which may depend upon the input of Bob. We verify the relativistic causality constraints by explicitly computing the marginal probabilities as
\begin{eqnarray}
    P(a,e|x,y,z) &=& \sum_{b,c} P(a,b,c,e|x,y,z) = \frac{1}{4} \; \quad \text{independent of} \; y, z, \nonumber \\
    P(b|x,y,z) &=& \sum_{a,c,e} P(a,b,c,e|x,y,z) = \frac{1}{2} \; \quad \text{independent of} \; x,z, \nonumber \\
    P(c|x,y,z) &=& \sum_{a,b,e} P(a,b,c,e|x,y,z) = \frac{1}{2} \; \quad \text{independent of} \; x,y, \nonumber \\
    P(a,e,b|x,y,z) &=& \sum_{c} P(a,b,c,e|x,y,z) = \frac{1}{4} \; \delta_{e, a \oplus b} \; \quad \text{independent of} \; z, \nonumber \\
    P(b,c|x,y,z) &=& \sum_{a,e} P(a,b,c,e|x,y,z) = \frac{1}{4} \; \quad \text{independent of} \; x.
\end{eqnarray}
On the other hand, we also compute that
\begin{eqnarray}
    P(a,c,e|x,y,z) &=& \sum_{b} P(a,b,c,e|x,y,z) = \frac{1}{4} \, \delta_{e \oplus c, \, xy \oplus xz \oplus yz} \quad \text{dependent on} \; y,
\end{eqnarray}
showing that the behavior $\{P(a,b,c,e|x,y,z)\} \in \mathcal{RC}$ but $\{P(a,b,c,e|x,y,z)\} \notin \mathcal{NS}$.

Finally, we have by construction that $e = a \oplus b$ in the behavior. For the input $(x^*,y^*,z^*) = (0,0,0)$ one has $x^*y^* \oplus x^* z^* \oplus y^* z^* = 0$ giving $a \oplus b \oplus c = 0$. Under this condition $e = a \oplus b$ ensures $e=c$ for this input, and we conclude that $P_{\text{guess}}(e=c|x^*,y^*,z^*) = 1$.

\end{proof}

\end{document}